\documentclass[11pt]{article}

\usepackage[T1]{fontenc}
\usepackage[utf8]{inputenc}
\usepackage{lmodern}
\usepackage{microtype}

\usepackage[margin=1in]{geometry}
\usepackage{setspace}
\usepackage{amsmath}
\usepackage{amssymb}
\usepackage{amsthm}
\usepackage{algorithm}
\usepackage{algpseudocode}
\usepackage{listings}
\lstdefinestyle{skill}{%
  basicstyle=\footnotesize\ttfamily,
  breaklines=true, breakatwhitespace=false, columns=fullflexible,
  keepspaces=true, showstringspaces=false,
  frame=single, framesep=5pt, xleftmargin=8pt, xrightmargin=4pt,
  aboveskip=8pt, belowskip=4pt,
  literate={—}{{\textemdash}}1 {κ}{{\ensuremath{\kappa}}}1 {→}{{\ensuremath{\rightarrow}}}1 {…}{{\ldots}}1
}

\usepackage{booktabs}
\usepackage{array}
\usepackage{graphicx}
\graphicspath{{figures/}}
\usepackage{caption}
\usepackage{xcolor}

\usepackage{enumitem}
\usepackage[round]{natbib}
\definecolor{linkblue}{RGB}{0,68,170}
\usepackage[colorlinks=true,citecolor=linkblue,urlcolor=linkblue,linkcolor=linkblue,
            breaklinks=true]{hyperref}

\theoremstyle{definition}
\newtheorem{definition}{Definition}
\theoremstyle{plain}
\newtheorem{proposition}{Proposition}

\newcommand{\passat}[1]{\textup{pass}@#1}
\newcommand{\arm}[2]{\mathrm{#1}@#2}
\newcommand{\Fjudge}{F_{\textup{judge}}}
\newcommand{\Frand}{F_{\textup{rand}}}
\newcommand{\E}{\mathbb{E}}
\newcommand{\Prob}{\mathbb{P}}

\title{\bfseries Scaffold, Not Vocabulary? A Controlled, Two-Tier,
Pre-Registered Study of a Popperian Code-Generation Skill}

\author{Mehmet \.{I}\c{s}can\thanks{Corresponding author. PythaLab, Y\i{}ld\i{}z
Technical University, Istanbul, Turkey. Email: \href{mailto:miscan@yildiz.edu.tr}{\texttt{miscan@yildiz.edu.tr}}.}\\[2pt]
\small PythaLab, Y\i{}ld\i{}z Technical University, Istanbul, Turkey}
\date{}

\begin{document}
\maketitle

\begin{abstract}
\noindent
Large language models (LLMs) increasingly write, review, and judge code, and a
fast-growing practice equips them with prompt ``skills'': system-prompt modules that ask
the model to plan, self-critique, or reason like a scientist. A prominent example tells
the model to behave as a \emph{Popperian falsificationist} that attacks its own
conjectures with the most severe tests it can devise, and such skills are reported to
improve generated code. The reported improvement, however, is almost always read off an
LLM-as-a-judge---an instrument with documented positional, self-preference, and stylistic
biases whose verdicts routinely collapse once a real control is added. We therefore ask a
sharper question than ``does the skill help?'': \emph{if} it appears to help, is the gain
produced by the skill's Popperian \emph{procedural content} or merely by the \emph{structure}
that any reasonable scaffold imposes? To separate the two we pre-register a two-tier ablation with
three controls: a length-matched \textbf{placebo}, a \textbf{labels-only} scaffold that
keeps the Popperian headers but strips the procedure, and an \textbf{execution oracle}
(HumanEval+ unit tests), augmented by a vocabulary-halo sentinel and a same-model
self-judge audit, and we deliberately anchor the design on a very small model, where a
genuine effect has the most room to show itself. On a frontier model (Claude Sonnet~4.6,
$N{=}163$) all four conditions sit near the benchmark ceiling ($\textup{V}{=}95.1\%$,
$\textup{F}{=}95.7\%$, $\textup{L}{=}95.7\%$, $\textup{P}{=}96.9\%$) and do not separate
(all six pairwise contrasts within $\pm2$ points, none statistically significant under paired
McNemar tests), so the pre-registered $+5$-point improvement hypothesis is not supported---though
we stress this is a \emph{ceiling-limited} non-detection, because a baseline already at $95.1\%$
leaves almost no room for a $5$-point gain. On the small model (Qwen2.5-Coder-0.5B, $N{=}164$;
$5{,}248$ generations and $164$ self-judges), where the baseline is far from any ceiling,
structured arms lift best-of-eight correctness by roughly $20$--$22$ points over $\arm{V}{8}$
(F and L by $+22.0$, the placebo by $+19.5$), but the full Popperian
skill shows no separable benefit over a labels-only scaffold: their aggregate best-of-eight
pass rates coincide ($\arm{F}{8}{=}\arm{L}{8}{=}56.7\%$ vs.\ $\arm{V}{8}{=}34.8\%$; though
the two disagree on $24$ individual problems) and the length-matched placebo trails by only
$2.4$ points (paired $p{=}0.60$). A same-model $0.5$B self-judge applying the Popperian rubric
does not beat random selection among its own samples ($25.6\%$ vs.\ a $24.9$--$26.8\%$ random
baseline) and concentrates $60\%$ of its picks on a single candidate index---a pattern
consistent with position bias, though we did not run the order-randomized replication that
would identify it cleanly. We read these results narrowly: in the two settings tested, the
full skill's Popperian \emph{procedural content} adds no separable execution-correctness
benefit beyond a labels-only scaffold, so the measured gains are attributable to scaffold
\emph{structure} rather than to that content. We contribute a calibrated negative result and
a reusable disambiguation protocol (placebo, labels-only, execution oracle, halo sentinel);
the finding bounds a specific engineering claim about one prompt-skill family and is not an
evaluation of Popperian methodology, whose meta-level use was load-bearing in our own design.

\medskip
\noindent\textbf{Keywords:} large language models; code generation; prompt engineering;
LLM-as-a-judge; ablation study; pre-registration; falsification; test-time compute;
reproducibility.
\end{abstract}

\section{Introduction}
\label{sec:intro}

Large language models have moved, in only a few years, from autocompleting single lines
to drafting, reviewing, and merging whole software changes. As they were deployed, a folk
engineering discipline grew up around them: practitioners discovered that \emph{how} you
ask matters as much as \emph{what} you ask, and they began packaging reusable
instructions (``prompt skills'') that tell a model to plan before coding, to enumerate
edge cases, to criticize its own draft, or to reason in the style of some expert. These
skills are cheap to write, easy to share, and routinely reported to lift quality. The most
intellectually attractive of them borrow the language of science itself: they ask the
model to act as a \emph{Popperian falsificationist}, to treat each candidate program as a
conjecture and to attack it with the most severe tests it can imagine before declaring
success \citep{popper1959logic,popper1963conjectures}. The appeal is clear: a model that
internalized the scientist's habit of seeking refutation rather than confirmation might
catch its own bugs.

The trouble is in the measurement. A skill's reported benefit is almost always established
by an \emph{LLM-as-a-judge}: a second model scores the output against a rubric, and the
skill ``wins'' if its scores are higher. But the judge is a fragile instrument. At the
coarsest level it is sensitive to the order in which candidates are presented:
\citet{wang2024faireval} find that simply swapping two answers flips the preferred one in
$82.5\%$ of comparisons for one judge and $46.3\%$ for a stronger one. Judges prefer
outputs from their own model family \citep{yang2026selfpref,spiliopoulou2025playfav} and (most relevant here) they reward surface features such as fluency, length, and
recognizable jargon over substance: \citet{wu-aji-2025-style} show that a longer answer
carrying minor factual errors is scored \emph{above} a shorter correct one (Elo $1206$
vs.\ $1096$). The pathology reaches into code specifically.
\citet{moon2026judgecover} report that code judges shift their verdicts by up to $26.7$
percentage points under content-preserving, cosmetic perturbations, and
\citet{zhao2026biasloop} report that injecting authority- or reasoning-flavored cues into a
judge prompt drags accuracy on identical code from $77.5\%$ down to $61.0\%$. A judge that
can be moved this far by words alone is the wrong instrument for deciding whether
\emph{words}, a skill's vocabulary, improved a program.

Zoom out from judges to the broader prompting literature and the picture sharpens into a
warning. A growing body of careful re-evaluations finds that celebrated prompt-engineering
effects shrink or vanish once a proper control is introduced.
\citet{vaugrante2024replication} re-run five popular techniques across six models and five
benchmarks and find that most, including chain-of-thought, produce no reliable gain
(chain-of-thought: $\approx\!0\%$ on average, $p{=}.8$), with only isolated,
architecture-specific exceptions. \citet{wang2025advancedpe} show that on high-capability
models the value of elaborate prompting collapses, and what survives is not phrasing but
\emph{execution feedback}. The constructive counterpart to these null results is an
emerging consensus about \emph{where} prompting effects actually come from: structure, not
wording. \citet{sclar2024formspread} demonstrate that meaning-preserving formatting choices
alone (spacing, separators, casing) move accuracy by as much as $76$ points;
\citet{fagadau2024prompt}, in a factorial study of $124{,}800$ Copilot prompts, find that
\emph{structural} features (worked examples, a task summary) significantly improve
correctness ($p{=}10^{-4}$ for examples) while \emph{stylistic} features (mood, tense) do
not; and \citet{khojah2025impact} confirm that adding a persona or a chain-of-thought
flourish on top of a structural scaffold yields no statistically significant correctness
gain. \citet{akli2026underspecification} sharpen the boundary: vocabulary perturbations
bite only when a prompt is structurally thin (a $-7.1\%$ average drop on the lightly
specified HumanEval, but only $-0.2\%$ on the richly specified LiveCodeBench); once the
scaffold is rich, the words barely matter.

Read together, this literature poses a clean, unanswered question for any
methodology-flavored skill. Suppose a Popperian coding skill appears to help. The apparent
help could be a judge artifact (the rubric rewards the words ``falsifiability'' and
``severity''); it could be a structural effect (the skill, like any scaffold, makes the
model enumerate steps and cases); or it could be the genuine article (the falsification
\emph{content} causes better code). Disentangling these requires three controls that prior
code-generation studies almost never deploy together: a \textbf{length-matched placebo}
that preserves prompt length and best-practice tone while removing all Popperian content; a
\textbf{labels-only} scaffold that keeps the Popperian section headers but strips the
procedure beneath them, so that the full skill can be compared against a fixed labelled
scaffold rather than against vanilla alone; and an
\textbf{execution oracle} that scores functional correctness with unit tests rather than
rubric impressions. The components of this design exist separately in the
literature (surface-form judge bias \citep{moon2026judgecover}, self-preference in
same-model judges \citep{yang2026selfpref,spiliopoulou2025playfav}, structure-versus-style
ablations \citep{fagadau2024prompt,khojah2025impact,akli2026underspecification}, and
formatting sensitivity \citep{sclar2024formspread}) but we are not aware of a prior
pre-registered study that combines a length-matched placebo, a labels-only scaffold, an
execution oracle, a vocabulary-halo sentinel, and a same-model self-judge audit to ask
whether a Popperian falsification skill adds value beyond plain structure. That integration
is the gap we fill.

We approached the problem philosophically as well as empirically, because the object of
study is itself a philosophical claim. Popper's program holds that a hypothesis earns
scientific standing not by accumulating confirmations but by surviving sincere attempts at
refutation; the \emph{degree of corroboration} $C(h,e)$ is explicitly not a probability of
truth \citep{popper1959logic}, and progress is framed as increasing \emph{verisimilitude}
\citep{popper1963conjectures,niiniluoto2014progress}. The error-statistical school makes the
operative idea precise: data warrant a hypothesis only if the test had a high probability of
exposing the hypothesis as false had it been false---Mayo and Spanos's \emph{severity
requirement} \citep{mayospanos2006severe,mayo2025severe}. A coding skill that merely
\emph{names} severity, falsifiability, and predictive risk has done none of this work;
whether the naming nevertheless changes behavior is an empirical question. There is an
irreducible reflexivity here that we embrace rather than hide: we evaluate a Popperian
\emph{skill} using a Popperian \emph{method}---pre-registered hypotheses, executable
counterexamples, and blinded outcome prediction. The skill is the object of measurement; the
method is the instrument, and our results show the two diverge.

A second design choice deserves its own justification, because it inverts the usual reflex
to test on the strongest available model. We test not only a frontier model but, centrally,
a very small one (a 0.5B-parameter local model). A weak model sits far below the benchmark
ceiling, which makes it a \emph{low-baseline} setting with substantial headroom to look for a
real effect: a prompt component that genuinely carries signal has ample room to produce a
large, unmistakable gain there. Our own data confirm the sensitivity of this testbed: the labels-only and
structured scaffold conditions lift best-of-eight performance by roughly $20$--$22$
points, so if the incremental Popperian content carried a correctness
signal, this is a setting where it would have ample room to show itself. Its absence at low
capability (the full skill does not separate from the labels-only scaffold) is therefore
more informative about a separable content effect than a high-capability ceiling result
alone could be. The same logic supplies a constructive screening principle that we make
explicit \emph{as a screen, not a proof of transfer}: a sensitive small-model testbed is a
reasonable place to \emph{screen} a candidate prompt component before committing
high-capability compute, and a component that does not move a low-baseline weak model with
ample headroom is, at the least, not demonstrated to be an active ingredient. We are careful about the limits of this principle (transfer is not guaranteed in either direction
\citep{kojima2022zeroshot,zhou2024lessreliable}, and capability-dependent effects in other
models or tasks are not ruled out) but as a negative screen it is useful, and it organizes
our two-tier design.

This reframing turns a vague ``does it work?'' into a falsifiable program. We pre-register
an ablation ladder---vanilla (V), labels-only (L), labels$+$discipline (LD),
labels$+$discipline$+$severity (LDS), and the full skill (F)---together with a
length-matched placebo (P) and a verificationist anti-skill (ANTI). We score a
high-capability model and a low-capability model against the same execution oracle, add a
reliability stage with a vocabulary-only \emph{halo sentinel}, and, at the low tier, test
whether the model can use its own Popperian rubric to select the best of several samples---a
miniature of test-time reward selection. Every hypothesis and threshold was committed to
public version control before the corresponding data were collected, following recent calls
for registered protocols in machine learning and NLP
\citep{vaccaro2026preregistration,sogaard2023twosided}.

\paragraph{Contributions and findings.}
This paper makes the following contributions; we state up front that the headline result is
a \emph{calibrated negative one}, and we are explicit about the conditions under which the
method fails.
\begin{enumerate}[leftmargin=1.4em,itemsep=2pt]
\item \textbf{A content-versus-structure ablation design} for methodology-flavored coding
skills: an additive ladder (V\,$\to$\,L\,$\to$\,LD\,$\to$\,LDS\,$\to$\,F) plus a
length-matched placebo and a verificationist anti-skill, with the $\textup{F}-\textup{L}$
contrast estimating the effect of the full Popperian procedural content added on top of a
fixed labels-only scaffold---a content-beyond-labels contrast, not a pure-vocabulary one
(\S\ref{sec:method}).
\item \textbf{A reliability stage with a vocabulary-halo sentinel} that detects a confound
intra-rater reliability misses entirely: our validated judge ($\textup{ICC}{=}0.959$) still
awards $10/20$ substance points to an output that contains the Popperian \emph{labels} and
nothing else.
\item \textbf{A two-tier correctness result.} At high capability the full skill shows no
separable correctness benefit over vanilla, the placebo, or labels-only (all within $2$
points, $N{=}163$); the pre-registered $+5$-point improvement hypothesis is not supported,
though this is a ceiling-limited non-detection rather than a demonstration of zero effect.
\item \textbf{A low-capability dissection,} positioned as a sensitive screen. Every structured arm
we tested lifts a $0.5$B model's best-of-eight correctness by roughly $20$--$22$ points, but the
full skill shows no separable benefit over labels-only (aggregate $\arm{F}{8}{=}\arm{L}{8}$) and
only a small, non-significant edge over a placebo---so the lift is carried by scaffold structure
rather than by the incremental Popperian content, in this setting.
\item \textbf{A small-scale self-judge audit:} a $0.5$B model applying its own Popperian
rubric to select among samples does not beat a random draw and concentrates its picks on one
candidate index, a result consistent with a verifier--generator capability gap and with
position bias (an order-randomized replication, which we did not run, would be needed to
identify the latter cleanly).
\item \textbf{A reusable disambiguation protocol} (placebo $+$ labels-only $+$ oracle $+$
halo sentinel) and a public release of pre-registrations, raw outputs, and analysis code,
so that future skill-effectiveness claims can be stated---and tested---on firmer ground.
\end{enumerate}
\noindent The boundary of the claim matters. We do \emph{not} conclude that Popperian
reasoning is worthless for code; we conclude that, \emph{in the two settings we tested}, the
full skill's procedural content showed no separable correctness benefit once structure was
controlled and correctness was measured by execution. Where the skill helped (the weak
model, under best-of-eight evaluation), a labels-only scaffold matched the aggregate lift, and the
corresponding single-sample labels/placebo comparison was not collected; where it might have helped
(the strong model), the benchmark ceiling left little room to detect it.

\section{Method}
\label{sec:method}

\subsection{The proposed framework}
\label{sec:framework}

Our object of study is a single Markdown ``skill'' loaded as the system prompt of a coding
agent. The skill instructs the agent to externalize a hypothesis about the required program,
to design severe tests against that hypothesis, to quantify ``severity'' and ``predictive
risk,'' and to revise before answering. The engineering claim attached to such skills is
that this Popperian \emph{content} improves the code. The framework we propose is built to
decide whether that claim survives once two well-known confounds (judge bias and prompt
structure) are removed, and it does so by combining an ablation ladder, two
confound-removing controls, two capability tiers, and an execution oracle into a single
pre-registered measurement instrument, sketched in Figure~\ref{fig:framework}.

\begin{figure}[t]
\centering
\includegraphics[width=\linewidth]{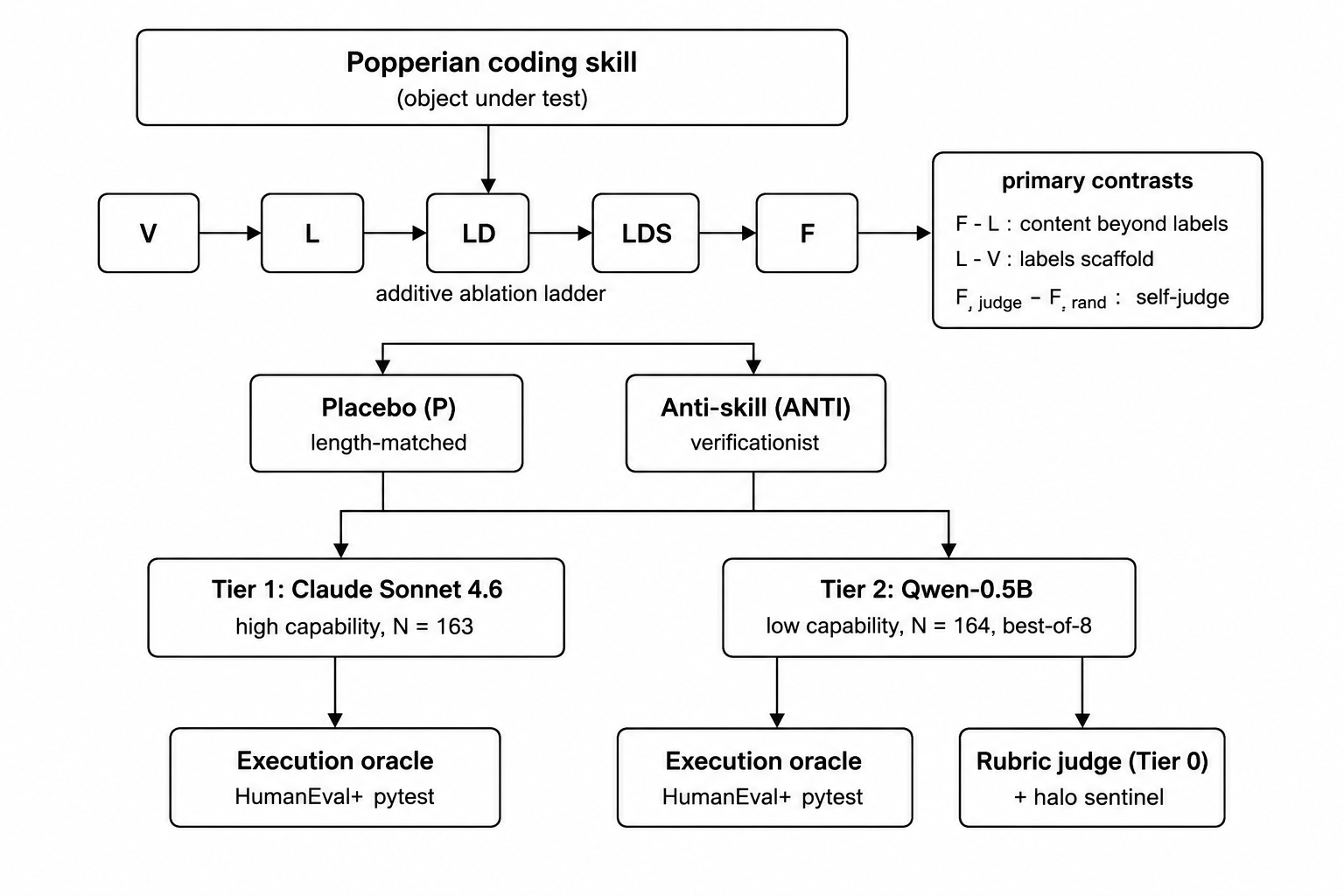}
\caption{The proposed framework. The Popperian skill is decomposed into an additive ablation
ladder (V\,$\to$\,L\,$\to$\,LD\,$\to$\,LDS\,$\to$\,F); a length-matched placebo (P) and a
verificationist anti-skill (ANTI) sit beside it as controls. Two capability tiers are each
scored by an execution oracle (correctness), with a separate rubric judge and
vocabulary-halo sentinel validated in Tier~0. The primary contrasts are
$\textup{F}-\textup{L}$ (full Popperian content added on top of a fixed labels-only scaffold;
not a pure-vocabulary contrast, see Proposition~\ref{prop:ident}), $\textup{L}-\textup{V}$
(labels/header scaffold over vanilla), and $\Fjudge-\Frand$ (whether a self-judge beats random selection).}
\label{fig:framework}
\end{figure}

The first part of the framework decomposes the skill into the seven nested conditions of
Table~\ref{tab:variants}. Each is a Markdown skill loaded as the system prompt of a coding
subagent; the ladder is \emph{additive}, so that consecutive conditions differ by exactly
one component and the marginal effect of that component is the difference of their outcomes.
Condition \textbf{L} keeps Popperian terminology in section headers only, with no procedure
beneath them; \textbf{LD} adds a numbered procedural scaffold; \textbf{LDS} adds a
severity-quantification formula; and \textbf{F} is the full skill. \textbf{P} is a
length-matched best-practices placebo that contains no Popperian content, and \textbf{ANTI}
is a verificationist inverse that instructs the model to seek confirmations. The single most
important comparison in the paper is \textbf{F versus L}: because L and F share the same
labels-only scaffold (the same headers, the same nominal sections) and differ in the
\emph{content} placed beneath those headers (the full falsification procedure, severity
formula, and predictive-risk narrative in F; nothing in L), their difference estimates the
marginal effect of that Popperian content on top of the scaffold that carries it. As
Proposition~\ref{prop:ident} makes explicit, this is a content-beyond-labels contrast rather
than a pure-vocabulary one (F and L also differ in procedure, instruction density, and
length) while the length-matched placebo P guards separately against a prompt-length
confound.

The seven conditions enter the two measurement stages asymmetrically, which we flag here
rather than leave implicit. The full ladder (including the intermediate rungs LD and LDS and
the anti-skill ANTI) is scored only at the rubric/anatomy stage (Tier~1.A1), where its purpose
is to localize \emph{which component} of the skill moves a rubric judge; the execution-oracle
stage (Tier~1.A2 and Tier~2) reports the four conditions---V, L, F, and P---for which complete
generation artifacts were collected and graded under the oracle. We make no execution-correctness
claim for LD, LDS, or ANTI, because they were not run through the oracle at scale; this is a
deliberate scope limit, restated in \S\ref{sec:discussion}.

\begin{table}[t]
\centering\small
\begin{tabular}{@{}llp{7.4cm}@{}}
\toprule
\textbf{Cond.} & \textbf{Adds} & \textbf{Description} \\
\midrule
V    & ---            & Vanilla; bare task, no skill \\
L    & labels         & Popperian terminology in headers only; no procedure \\
LD   & $+$ discipline & L $+$ a numbered procedural scaffold (no formulas) \\
LDS  & $+$ severity   & LD $+$ a severity-quantification formula \\
F    & $+$ meta       & Full Popperian skill (procedure, severity, predictive risk) \\
P    & (control)      & Length-matched best-practices placebo; no Popperian content \\
ANTI & (control)      & Verificationist anti-skill; seek confirmations \\
\bottomrule
\end{tabular}
\caption{Pre-registered conditions. The ladder V$\to$L$\to$LD$\to$LDS$\to$F is additive; the
\textbf{F\,vs.\,L} contrast estimates the full Popperian content added on top of a fixed
labels-only scaffold (a content-beyond-labels contrast, not a pure-vocabulary one;
Proposition~\ref{prop:ident}), \textbf{L\,vs.\,V} measures the labels/header scaffold effect over vanilla, and
\textbf{F\,vs.\,P} guards against a length confound.}
\label{tab:variants}
\end{table}

The second part of the framework states, precisely, what each contrast estimates. Let
$c \in \mathcal{C}=\{\textup{V},\textup{L},\textup{LD},\textup{LDS},\textup{F},\textup{P},
\textup{ANTI}\}$ index a condition, realized as a system prompt $s_c$. For a problem $q$
drawn from a benchmark $\mathcal{Q}$ and a generator $M$, let $Y(c,q)\in\{0,1\}$ be the
event that a sample produced under $\langle s_c, q\rangle$ passes the execution oracle. The
correctness of a condition is
\begin{equation}
\label{eq:pi}
\pi(c)=\E_{q\sim\mathcal{Q}}\,\E_{M}\,[\,Y(c,q)\,].
\end{equation}
Write each prompt as $s_c=(\sigma_c,\kappa_c)$, where $\sigma_c$ is a \emph{labels-only
scaffold} (the Popperian section headers and the output contract) and $\kappa_c$ is the
\emph{procedural content} placed beneath those headers (the numbered procedure, the
severity formula, the predictive-risk narrative, and the revision discipline). The ladder
holds the labels-only scaffold fixed across L and F ($\sigma_{\textup{L}}=\sigma_{\textup{F}}$)
and adds content in steps, while V is the empty scaffold; this yields a \emph{structure}
estimand and a \emph{content-beyond-labels} estimand,
\begin{align}
\Delta_{\mathrm{struct}}  &= \pi(\textup{L})-\pi(\textup{V}), &&\text{(labels-only scaffold over vanilla)} \label{eq:struct}\\
\Delta_{\mathrm{content}} &= \pi(\textup{F})-\pi(\textup{L}). &&\text{(full Popperian content over labels-only)} \label{eq:content}
\end{align}
We are deliberately conservative about how $\Delta_{\mathrm{content}}$ is read.

\begin{proposition}[Conservative interpretation of the F$-$L contrast]
\label{prop:ident}
$\Delta_{\mathrm{content}}=\pi(\textup{F})-\pi(\textup{L})$ in \eqref{eq:content} estimates the
incremental effect of adding the full Popperian procedural content on top of a fixed
labels-only scaffold. It is \emph{not} a pure vocabulary (terminology) effect: F and L differ
not only in Popperian wording but also in procedure, the severity formula, the predictive-risk
narrative, revision discipline, instruction density, and token length. A pure-vocabulary claim
would require further ablations that hold procedure, length, and behavioral instructions fixed
while varying only terminology.
\end{proposition}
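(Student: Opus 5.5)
The plan is to treat Proposition~\ref{prop:ident} as an identification statement and prove it by writing the prompt difference between F and L as a vector of factors. I will then show that the contrast $\Delta_{\mathrm{content}}$ in \eqref{eq:content} equals the effect of changing that whole vector jointly, and not of any single coordinate. First I model the outcome as a function of the prompt only: for fixed $q$ and $M$, $\E_M[Y(c,q)]=g_q(s_c)$. Substituting into \eqref{eq:pi} gives $\pi(c)=\E_q\, g_q(s_c)$. Because $\sigma_{\textup{L}}=\sigma_{\textup{F}}=\sigma$ by construction of the ladder, $\Delta_{\mathrm{content}}=\E_q[g_q(\sigma,\kappa_{\textup{F}})-g_q(\sigma,\kappa_{\textup{L}})]$ with $\kappa_{\textup{L}}=\emptyset$. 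This is, by definition, the average incremental effect of adding $\kappa_{\textup{F}}$ on top of the fixed labels-only scaffold, which establishes the positive half of the statement.

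For the negative half, I decompose the content into coordinates: $\kappa=(\omega,\rho,\phi,\nu,\delta,\lambda)$. These stand for Popperian wording, procedure, severity formula, predictive-risk narrative and revision discipline, instruction density, and token length. I then verify from Table~\ref{tab:variants} and the condition descriptions that $\kappa_{\textup{F}}$ and $\kappa_{\textup{L}}$ differ in every coordinate, not only in $\omega$. A pure-vocabulary effect would be $\E_q[g_q(\sigma,\omega_{\textup{F}},\rho,\phi,\nu,\delta,\lambda)-g_q(\sigma,\omega_{\textup{L}},\rho,\phi,\nu,\delta,\lambda)]$, with the non-wording coordinates held fixed.

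To show that $\Delta_{\mathrm{content}}$ does not identify this quantity, I telescope $\Delta_{\mathrm{content}}$ along a path that changes one coordinate at a time. The result is a sum of coordinate-wise effects plus interaction terms, and the wording effect is only one summand. I then give a two-point counterexample. Let $g$ depend only on $\lambda$, say $g=\mathbf{1}[\lambda>\lambda_0]$ with $\lambda_{\textup{L}}<\lambda_0<\lambda_{\textup{F}}$. In this model the vocabulary effect is $0$ while $\Delta_{\mathrm{content}}=1$. Conversely, the cancellation case $\Delta_{\mathrm{content}}=0$ with a nonzero vocabulary effect is also possible. Therefore no function of $(\pi(\textup{F}),\pi(\textup{L}))$ alone determines the vocabulary effect.

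The final step is to show that the additional ablations named in the statement suffice. If conditions are run that hold $(\rho,\phi,\nu,\delta,\lambda)$ fixed and vary only $\omega$, the vocabulary contrast is identified by the same plug-in argument as in the first step. I also note that the placebo P matches $\lambda$ with F but does not match the other coordinates, so it only partially addresses the confound. The main obstacle is making the factor decomposition well-defined, because the coordinates of a natural-language prompt are not independently manipulable. I would handle this by stating the claim conditional on a declared factorization and proving non-identification under any factorization in which F and L differ in at least one non-wording coordinate.
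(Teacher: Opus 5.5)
Your proposal is correct and follows the paper's justification: fixing $\sigma_{\textup{L}}=\sigma_{\textup{F}}$ removes the scaffold-presence confound, and the bundled difference between $\kappa_{\textup{F}}$ and $\kappa_{\textup{L}}=\varnothing$ blocks attribution to terminology alone---you make this formal through the factor model, telescoping and counterexamples, whereas the paper argues informally and also uses the placebo contrast $\pi(\textup{F})-\pi(\textup{P})$ as a second check a credible content effect should survive. One small tightening: to conclude that no function of $(\pi(\textup{F}),\pi(\textup{L}))$ determines the vocabulary effect you need two models with identical $(\pi(\textup{F}),\pi(\textup{L}))$ but different vocabulary effects (e.g.\ your cancellation model paired with a constant $g$), and your cancellation model by itself shows that $\Delta_{\mathrm{content}}=0$ does not bound any single component's effect, so the paper's word ``conservative'' is best read as non-attributive rather than as a bound.
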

\begin{proof}[Justification]
Because $\sigma_{\textup{L}}=\sigma_{\textup{F}}$, the contrast removes the scaffold-presence
confound that separates any skill from vanilla, so $\Delta_{\mathrm{content}}$ is not inflated
by ``has a scaffold versus has none.'' But the residual difference between $\kappa_{\textup{F}}$
and $\kappa_{\textup{L}}=\varnothing$ bundles several co-varying components, so the contrast is
\emph{conservative} for any single one of them---vocabulary included---and cannot attribute an
effect to terminology alone. The length-matched placebo supplies a second control:
$\pi(\textup{F})-\pi(\textup{P})$ holds prompt length and generic best-practice tone fixed while
removing Popperian content, so a credible content effect should survive \emph{both}
$\textup{F}-\textup{L}$ and $\textup{F}-\textup{P}$. The ablations needed to isolate terminology
specifically are enumerated in \S\ref{sec:discussion}. \end{proof}

Correctness under sampling is reported with the unbiased $\passat{k}$ estimator of
\citet{chen2021humaneval}: generating $n\ge k$ samples per problem and counting $c_q$
correct,
\begin{equation}
\label{eq:passk}
\passat{k}=\E_{q\sim\mathcal{Q}}\!\left[\,1-\binom{n-c_q}{k}\big/\binom{n}{k}\,\right],
\end{equation}
so that $\passat{1}$ is the single-sample (greedy) rate and $\passat{n}$ is best-of-$n$.
To make the philosophical target operational we adopt the error-statistical definition of
severity, which the skill's vocabulary gestures at but does not implement.

\begin{definition}[Operational severity, after \citealp{mayospanos2006severe}]
\label{def:severity}
A hypothesis $H$ (here ``this program is correct'') passes a test $T$ with data $x_0$
\emph{severely} when (S1) $x_0$ agrees with $H$, and (S2) with high probability $T$ would
have produced a result agreeing \emph{less} well with $H$ had $H$ been false:
\begin{equation}
\label{eq:sev}
\Prob\!\big(\,d(X,H) > d(x_0,H)\ \big|\ \neg H\,\big)\ \ge\ 1-\alpha,
\end{equation}
where $d(\cdot,H)$ measures disagreement with $H$.
\end{definition}

\noindent An execution oracle on a dense suite of adversarial unit tests \emph{approximates}
an operational severe-test regime in the sense of \eqref{eq:sev}: it raises the probability
that an incorrect program is exposed, because a wrong program is likely to fail at least one
augmented test. We are careful not to overstate the correspondence---the augmented tests are
an engineering analogue of severity, not a full instantiation of Mayo--Spanos severity, which
additionally requires a specified test distribution, an alternative error model, and an
explicit probability of detecting a discrepancy. A skill that merely \emph{writes} the word
``severity'' supplies none of this, which is why we measure correctness by execution and treat
rubric scores as suspect. That suspicion is itself quantified through a halo ratio.

\begin{definition}[Vocabulary-halo ratio]
\label{def:halo}
For a rubric judge $J$ and condition $c$, let $\bar J(c)$ be the mean substance-dimension
score of genuine outputs and $J(\mathrm{sent}_c)$ the score of a \emph{sentinel}---a
synthetic output containing the labels of $c$ but no code or procedure. The halo ratio is
\begin{equation}
\label{eq:halo}
\rho_c=\frac{J(\mathrm{sent}_c)}{\bar J(c)},
\end{equation}
and a pre-registered halo is declared when $\rho_c>\tau$ with $\tau=0.70$.
\end{definition}

Finally, ``no meaningful difference'' is operationalized by an equivalence rule rather than a
failure to reject. With a pre-registered margin $\delta=5$ points, a directional hypothesis
of the form $\pi(\textup{F})-\pi(\textup{V})\ge\delta$ is declared \emph{falsified} when the
observed difference is below $\delta$ and the upper confidence limit excludes it, and two
conditions are declared equivalent when the two one-sided tests (TOST) for paired
proportions reject at the $\pm\delta$ boundary \citep{tango1998equivalence}:
\begin{equation}
\label{eq:tost}
\text{equivalence}(c,b)\ \Longleftrightarrow\
\big[\widehat{\pi}(c)-\widehat{\pi}(b)\big]_{\text{CI}_{1-2\alpha}} \subset (-\delta,+\delta).
\end{equation}
Paired pass/fail outcomes use McNemar's test \citep{fagerland2013mcnemar}, and all intervals
are bootstrap $95\%$ confidence intervals \citep{diciccio1996bootstrap}.

The rubric judge that produces the Tier-0 evidence is itself validated before any reliance on
it. Tier-1.A1 rubric scores come from a Claude Sonnet~4.6 subagent applying a ten-dimension
rubric to each output, blinded to the condition. We measure intra-rater reproducibility by
the intraclass correlation coefficient under a two-way random-effects, absolute-agreement, single-measures model (ICC(2,1) in the nomenclature of \citet{koo2016icc}, treating the
repeated scoring occasion as the random replicate factor) and interpret it with their
thresholds and those of \citet{mehta2018icc}. Cross-rater agreement across Sonnet, Opus, and
Haiku judges uses Krippendorff's ordinal $\alpha$ \citep{krippendorff2004content}. We then
construct vocabulary-only sentinels (Definition~\ref{def:halo}) and score them, because
intra-rater reliability, however high, cannot reveal a judge that reliably rewards the
\emph{labels} themselves. All hypotheses and thresholds were committed to the project's git history
before the corresponding data were collected, following registered-protocol practice
\citep{vaccaro2026preregistration,sogaard2023twosided}; because the registered protocol fixes
the decision rules in advance, every negative verdict reported later is the firing of a
pre-committed probe rather than a post-hoc reading. We close the framework with a boundary we
keep throughout: the skill names three Popperian constructs (falsifiability and the
non-probabilistic degree of corroboration \citep{popper1959logic}, the severity of a test in
the error-statistical sense \citep{mayospanos2006severe,shea2019popper}, and verisimilitude
\citep{popper1963conjectures,niiniluoto2014progress}) but we evaluate the \emph{coding skill}
that invokes these constructs, and our experiments neither validate nor refute Popper's
philosophy.

\subsection{The proposed algorithm}
\label{sec:algorithm}

The framework is realized by two procedures. Algorithm~\ref{alg:eval} is the per-condition
correctness pipeline used in both tiers; it is deliberately judge-free downstream of
generation, so that the verdict on whether vocabulary helps never passes through a biased
rubric. For each problem the generator produces $n$ samples (greedy plus stochastic), the
entry function is recovered from each sample by parsing the abstract syntax tree, and the
function is executed against the augmented HumanEval+ tests under a process timeout; the
per-condition correctness $\pi(c)$ and its $\passat{k}$ follow from
\eqref{eq:pi}--\eqref{eq:passk}, with a bootstrap interval. A mandatory pre-flight step
aborts the run unless every canonical reference solution passes the oracle, which guards
against a grader that silently mis-scores.

\begin{algorithm}[t]
\caption{Pre-registered correctness evaluation for one condition}
\label{alg:eval}
\begin{algorithmic}[1]
\Require condition prompt $s_c$, problems $\mathcal{Q}$, generator $M$, samples $n$,
oracle $O$ (augmented unit tests)
\Ensure $\passat{k}$ for $k\le n$ with bootstrap $95\%$ CI
\State sanity-check $O$ on canonical solutions; \textbf{abort} if any reference fails
\For{each problem $q\in\mathcal{Q}$}
  \For{$i=1$ to $n$}
    \State $o_{q,i}\gets M(\langle s_c,q\rangle)$ \Comment{temperature $0$ for $i{=}1$, $0.8$ otherwise}
    \State extract entry function from $o_{q,i}$ by AST; \textbf{else} mark non-extract
    \State $Y_{q,i}\gets O(\text{extracted}, q)$ under a process timeout
  \EndFor
  \State $c_q\gets\sum_i Y_{q,i}$
\EndFor
\State \Return $\passat{k}$ via Eq.~\eqref{eq:passk}
\end{algorithmic}
\end{algorithm}

Algorithm~\ref{alg:ttrs} is the test-time reward-selection procedure used in Tier~2 to ask
whether a model can grade itself. The generator's own Popperian rubric is applied to its
$n$ samples to pick a single ``best'' one ($\Fjudge$); the same selection is also made by
drawing one of the $n$ samples uniformly at random ($\Frand$). Both choices are then scored
by the execution oracle. A self-judge that carries signal must beat the random draw by the
pre-registered margin; we additionally record the selected index and the rating pattern so
selection-index concentration can be audited as evidence consistent with position sensitivity.

\begin{algorithm}[t]
\caption{Test-time reward selection with a same-model self-judge (Tier 2)}
\label{alg:ttrs}
\begin{algorithmic}[1]
\Require problem $q$, $n$ samples $\{o_i\}$ from condition F, self-judge $J{=}M$ with
Popperian rubric $R$, oracle $O$, margin $\delta$
\State $r \gets J(R, \{o_i\})$ \Comment{ratings array; pad/truncate to $n$, clamp to $[1,10]$}
\State $i^\star \gets \arg\max_i r_i$; \quad $\Fjudge \gets O(o_{i^\star}, q)$
\State $j \sim \mathrm{Uniform}\{1,\dots,n\}$; \quad $\Frand \gets O(o_{j}, q)$ \Comment{matched random draw}
\State record selected index $i^\star$ and rating pattern $r$ \Comment{selection-index audit}
\State \textbf{verdict:} the self-judge carries signal iff $\Fjudge-\Frand \ge \delta$
\end{algorithmic}
\end{algorithm}

\section{Results}
\label{sec:results}

The proposed framework was tested in two distinct execution environments that bracket the
capability spectrum, summarized in Table~\ref{tab:setup}. Correctness in both environments is
adjudicated on \emph{HumanEval} and its augmented variant \emph{HumanEval+}.
HumanEval \citep{chen2021humaneval} is a set of $164$ hand-written Python programming problems
released by OpenAI (\href{https://github.com/openai/human-eval}{github.com/openai/human-eval});
each problem pairs a function signature and docstring with a small set of hidden unit tests, and
correctness is decided by executing the generated function against those tests. Because the
original test suites are sparse enough to let incorrect programs pass, we use HumanEval+ from
the EvalPlus framework \citep{liu2023evalplus}
(\href{https://github.com/evalplus/evalplus}{github.com/evalplus/evalplus}), which augments
each problem with roughly $80\times$ more automatically generated tests (an EvalPlus-reported
average of $764.1$ per problem) and which Liu et al.\ showed reduces measured $\passat{k}$ by
up to $19.3$--$28.9\%$ across $26$ LLMs relative to vanilla HumanEval---direct evidence that
the original suite overestimates functional correctness. Provenance and usage are listed in the
upper block of Table~\ref{tab:setup}. The high-capability surface comprises $N{=}163$ problems
(we exclude HumanEval/$32$, whose augmented test contains an unpacking bug incompatible with
its single-float canonical return), while the low-capability surface comprises the full
$N{=}164$ problems (HumanEval/$32$ is admissible under the Tier-2 grader path); together the
two tiers cover $327$ problem instances. All correctness numbers come from a deterministic \texttt{pytest} oracle (no LLM judge is in the loop downstream of generation) that parses the fenced code, extracts the entry function by AST, and executes it against the augmented
tests under a multiprocessing timeout.

The two tiers differ primarily in generator capability, serving path, sampling protocol
($1$ greedy sample vs.\ $8$), the inferential surface ($N{=}163$ vs.\ $N{=}164$ and
$\passat{1}$ vs.\ $\passat{8}$), one benchmark-handling detail (HumanEval/$32$), and the
presence of a self-judge stage in Tier~2 only; these differences are part of the intended
bracketing of two practically relevant serving regimes and are \emph{not} a controlled scaling
curve, so we do not draw direct tier-to-tier causal comparisons. Model identifiers and serving
are stated as recorded in our run logs. The high-capability tier uses Claude Sonnet~4.6
(model identifier
\texttt{claude-sonnet-4-6} as logged) as the generator, with Claude Haiku~4.5
(\texttt{claude-haiku-4-5-20251001}) and Claude Opus~4.7 (\texttt{claude-opus-4-7}) as
auxiliary generators and judges, all invoked as subagents through the Claude Code agentic CLI.
This serving path is itself a boundary on Tier-1 internal validity that we state plainly:
because generation runs through the agentic CLI rather than the raw model API, every Tier-1
condition---``vanilla'' included---is a condition \emph{within that harness} and may carry
operator or system scaffolding we neither set nor observe. ``V'' therefore denotes a vanilla
prompt inside the Claude Code path, not a raw-API baseline; raw-API behavior could differ, and
a raw-API replication is a natural follow-up (\S\ref{sec:discussion}). The low-capability tier uses
Qwen2.5-Coder-0.5B-Instruct (fp16; Ollama tag \texttt{qwen2.5-coder:0.5b-instruct-fp16})
served locally through the Ollama HTTP API on a 6\,GB-VRAM laptop; the HTTP path is mandatory
because the full skill prompt reaches $\approx\!4{,}874$ tokens and the CLI path silently
truncates at $2{,}048$, which would convert the full skill into vanilla. The Tier-2 self-judge
is the same 0.5B model. Reliability and rubric ablation additionally use three hand-authored
cases scored on a ten-dimension rubric by the Sonnet judge, validated in Tier~0 and audited
by an Opus secondary judge.

\begin{table}[t]
\centering\small
\begin{tabular}{@{}p{3.3cm}p{5.0cm}p{5.0cm}@{}}
\toprule
 & \textbf{Tier 1 (high capability)} & \textbf{Tier 2 (low capability)} \\
\midrule
\multicolumn{3}{@{}l}{\emph{Datasets and provenance}}\\
Benchmark        & HumanEval+ / EvalPlus \citep{chen2021humaneval,liu2023evalplus} & HumanEval+ / EvalPlus \\
Source           & \href{https://github.com/openai/human-eval}{openai/human-eval}; \href{https://github.com/evalplus/evalplus}{evalplus/evalplus} & same \\
Problems $N$     & $163$ (HumanEval/$32$ excluded) & $164$ (full $0$--$163$) \\
Augmented tests  & $\approx\!764$ per problem ($80\times$ HumanEval) & same \\
\midrule
\multicolumn{3}{@{}l}{\emph{Models and serving}}\\
Generator        & Claude Sonnet 4.6 (\texttt{claude-sonnet-4-6}) & Qwen2.5-Coder-0.5B-Instruct-fp16 \\
Auxiliary        & Haiku 4.5, Opus 4.7 (generation/judging) & --- \\
Serving          & Claude Code CLI & Ollama HTTP API ($\textit{num\_ctx}{=}8192$) \\
\midrule
\multicolumn{3}{@{}l}{\emph{Protocol}}\\
Samples/problem  & $1$ (greedy) & $8$ ($1$ greedy $+$ $7$ at $T{=}0.8$) \\
Generations      & $652$ cells ($4$ conditions) & $5{,}248$ $+$ $164$ self-judges \\
Correctness      & deterministic \texttt{pytest} oracle & deterministic \texttt{pytest} oracle $+$ self-judge \\
Pre-reg.\ commit & \texttt{cb7bca0} & \texttt{e013d5b} (before any Qwen inference) \\
Compute          & Claude Code CLI (subscription) & $\approx\!3.7$\,h, peak $\approx\!1.4$\,GB VRAM \\
\bottomrule
\end{tabular}
\caption{Experimental configuration. Both tiers are scored by the same deterministic
execution oracle; Tier~2 additionally uses the generator as a self-judge. Dataset provenance,
exact model identifiers, serving paths, and pre-registration commits are stated for
reproducibility.}
\label{tab:setup}
\end{table}

We first validate the rubric judge (Tier~0). The Claude Sonnet~4.6 judge is highly
reproducible: intra-rater $\textup{ICC}=0.959$ ($95\%$~CI $[0.904,0.987]$), ``excellent'' by
the \citet{koo2016icc} thresholds. Cross-rater agreement across the three judge models is only
moderate (Krippendorff $\alpha=0.544$, itself an $N{=}25$ estimate whose sampling spread is
wide), in fact below Krippendorff's conventional $0.667$ threshold for even tentative reliance
\citep{krippendorff2004content}. High intra-rater reproducibility is consistency, not validity:
a judge can repeat itself precisely and still reward the wrong features. For both reasons we treat the entire rubric stage as \emph{exploratory and diagnostic} (useful for locating which
component of the skill moves a judge, not for any correctness claim) and every correctness result below rests on the execution oracle instead. Even granting the judge its
excellent intra-rater reliability, a pre-registered counterexample fired: a vocabulary-only
sentinel containing the severity \emph{labels} but no code and no procedure received $10$ out
of $20$ on the substance dimensions; measured against the genuine LDS condition mean of
$14.0$ (Definition~\ref{def:halo}, $\bar J(\textup{LDS}){=}14.0$), this is a halo ratio of
$\rho=10/14.0=0.714$, exceeding the pre-registered $\tau=0.70$ threshold. The judge is
therefore a reliable instrument that nonetheless lets Popperian wording leak into substance
scores, a local instance of a general pathology \citep{wu-aji-2025-style,moon2026judgecover}.

This halo is exactly what makes the rubric ladder misleading, as the anatomy ablation
(Tier~1.A1) shows. On the ten-dimension rubric the ladder climbs monotonically
(Table~\ref{tab:tier1a1}, Figure~\ref{fig:ablation}), which at face value looks like
cumulative evidence for the skill. The decomposition tells a different story. The single
large, significant increment is the \emph{procedural scaffold}:
$\textup{LD}-\textup{L}=+4.11$ points ($p=0.004$). Adding the \emph{severity} machinery on
top is not significant ($\textup{LDS}-\textup{LD}=+0.78$, $p=0.137$), and its sentinel is
exactly the one that triggers the halo ($\rho=0.714$); the full skill's final increment over
LDS is descriptive only ($+0.38$). What a naive reading would credit to Popperian machinery
is, on inspection, the scaffold plus a vocabulary halo, not the severity content. This reading
is consistent with small-sample robustness checks: re-judging in isolated single-cell context
reproduces the scores (Pearson $r=0.998$ on $n{=}5$ re-judged cells), a secondary Opus judge
agrees ($r=0.935$, $n{=}6$), and having a different model (Haiku) blindly re-author the
variants preserves the ordering---though, given the rubric stage's exploratory status and these
small samples, we lean on it only for anatomy, not for any correctness claim.

\begin{table}[t]
\centering\small
\begin{tabular}{@{}lcccl@{}}
\toprule
\textbf{Variant} & $n$ & \textbf{Mean (0--20)} & \textbf{Std.} & \textbf{Key contrast} \\
\midrule
V (vanilla)        & 18 & 4.62  & ---  & --- \\
L (labels)         & 28 & 9.11  & 3.99 & $\textup{L}-\textup{V}=+4.49$ (structure) \\
LD (+scaffold)     & 28 & 13.22 & 1.95 & $\textup{LD}-\textup{L}=+4.11$, $p=0.004$ \textbf{(sig.)} \\
LDS (+severity)    & 28 & 14.00 & 1.47 & $\textup{LDS}-\textup{LD}=+0.78$, $p=0.137$ (n.s.) \\
F (full skill)     & 18 & 14.38 & ---  & $\textup{F}-\textup{LDS}=+0.38$ (descriptive) \\
\bottomrule
\end{tabular}
\caption{Tier-1.A1 rubric ablation, judged by Claude Sonnet~4.6 (ten-dimension rubric, score
$0$--$20$; three hand-authored skill-design cases $\times$ three generator models $\times$
repetitions, $n$ judged cells per variant). The only large, significant increment is the
procedural scaffold (L$\to$LD); the
severity vocabulary (LD$\to$LDS) is not significant and its sentinel triggers the halo. The V
and F cells are reused from Tier~0 with aggregate-mean reconstruction, hence no per-cell standard
deviation. ANTI was registered as a control but is not part of the monotone ladder shown here,
and no correctness claim is made for it.}
\label{tab:tier1a1}
\end{table}

\begin{figure}[t]
\centering
\includegraphics[width=0.84\linewidth]{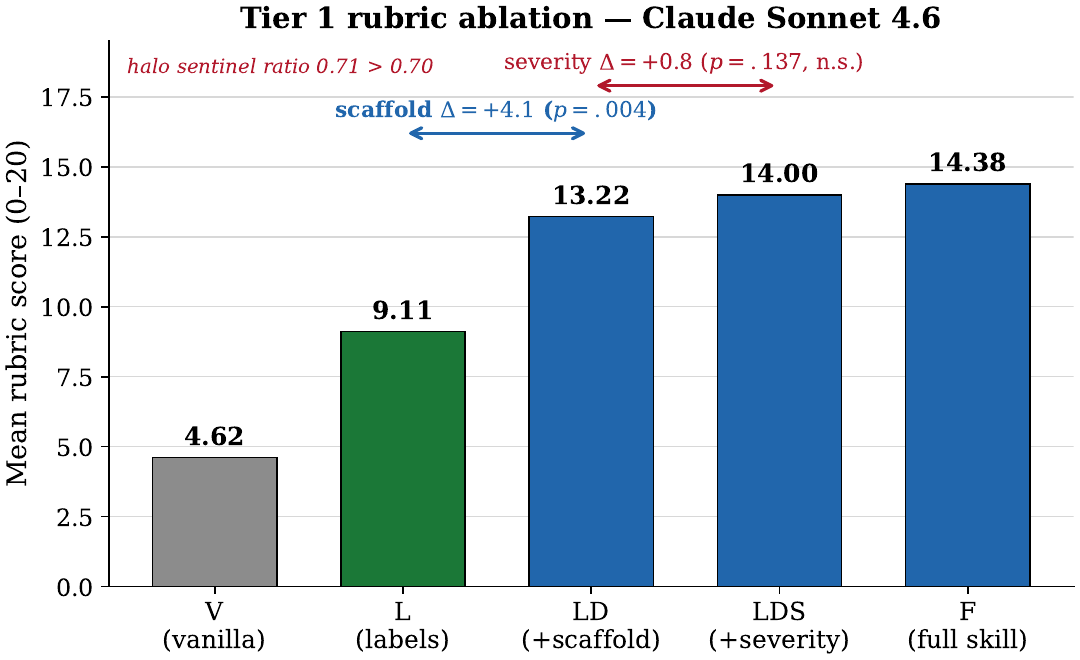}
\caption{Tier-1.A1 rubric means across the ablation ladder, judged by Claude Sonnet~4.6. The
only large, significant increment is the procedural \emph{scaffold} (L$\to$LD, $+4.1$,
$p{=}.004$); adding \emph{severity} (LD$\to$LDS, $+0.8$, $p{=}.137$) is not significant and
its sentinel triggers the vocabulary halo (ratio $0.71>0.70$). \textbf{What it means:} what
looks like a cumulative ``Popperian'' rubric gain is dominated by the labels/scaffold structure, with the
severity vocabulary contributing halo rather than substance.}
\label{fig:ablation}
\end{figure}

When the rubric is replaced by the execution oracle at the high-capability tier
(Tier~1.A2, $N{=}163$, Claude Sonnet~4.6), the apparent rubric-stage contribution does not
appear in execution correctness. As Table~\ref{tab:tier1a2} and Figure~\ref{fig:ceiling}
show, all four conditions fall within $2$ points and their $95\%$ confidence intervals overlap
heavily: $\textup{V}=95.1\%$, $\textup{F}=95.7\%$, $\textup{L}=95.7\%$, $\textup{P}=96.9\%$.
The full Popperian skill does not exceed vanilla by the pre-registered $5$-point margin: the
observed gap is $+0.6$ points, the paired difference is not significant (McNemar $p=1.0$,
discordant $3/2$ of $163$), and the upper bound of its paired $95\%$ bootstrap CI is $+3.1$
points, which excludes $+5$ (Table~\ref{tab:paired}). By the pre-registered rule the
directional hypothesis HA1 ($\textup{F}-\textup{V}\ge 5$) is therefore \emph{not supported};
we read this specifically as a \emph{ceiling-limited non-detection} rather than a demonstration
of zero effect, because a baseline already at $95.1\%$ leaves almost no room for a $+5$-point
gain---reaching the threshold would require near-perfect generated-solution accuracy under the
augmented oracle, after excluding the known HumanEval/$32$ issue from the high-tier surface. Two
further pre-registered predictions hold in the
same data: labels-only does not separate from the full skill ($\textup{L}=\textup{F}=95.7\%$,
HA4), and, rather than the skill beating the placebo, the \emph{placebo is numerically highest}
($96.9\%$, so HA2's confound prediction holds with the sign reversed). A pilot reading of
$\textup{F}>\textup{V}$ by $+14$ points at $N{=}7$ was a small-sample artifact, not reproduced
once the surface grew to $163$ problems. At a near-ceiling benchmark there is little room for
\emph{any} condition to separate, and what little movement exists slightly favors the
\emph{shorter} prompts.

\begin{table}[t]
\centering\small
\begin{tabular}{@{}lcccc@{}}
\toprule
\textbf{Condition (Sonnet 4.6)} & $n_{\text{pass}}/163$ & \textbf{pass@1} & \textbf{95\% CI} & $\Delta$ vs.\ V \\
\midrule
V (vanilla)     & 155 & 0.951 & $[0.914, 0.982]$ & --- \\
F (Popperian)   & 156 & 0.957 & $[0.926, 0.988]$ & $+0.6$ \\
L (labels-only) & 156 & 0.957 & $[0.920, 0.988]$ & $+0.6$ \\
P (placebo)     & 158 & \textbf{0.969} & $[0.939, 0.994]$ & $+1.8$ \\
\bottomrule
\end{tabular}
\caption{Tier-1.A2 correctness on HumanEval+ ($N{=}163$, generator Claude Sonnet~4.6, scored
by the \texttt{pytest} oracle). All conditions lie within $2$ points (statistically equivalent
at the $\pm5$-point margin; Table~\ref{tab:paired}); the placebo is numerically highest and the
full skill matches labels-only. The pre-registered $+5$-point gain (HA1) is \emph{not supported},
but this is a ceiling-limited non-detection: with vanilla at $95.1\%$ the threshold is
near-unreachable.}
\label{tab:tier1a2}
\end{table}

\begin{figure}[t]
\centering
\includegraphics[width=0.80\linewidth]{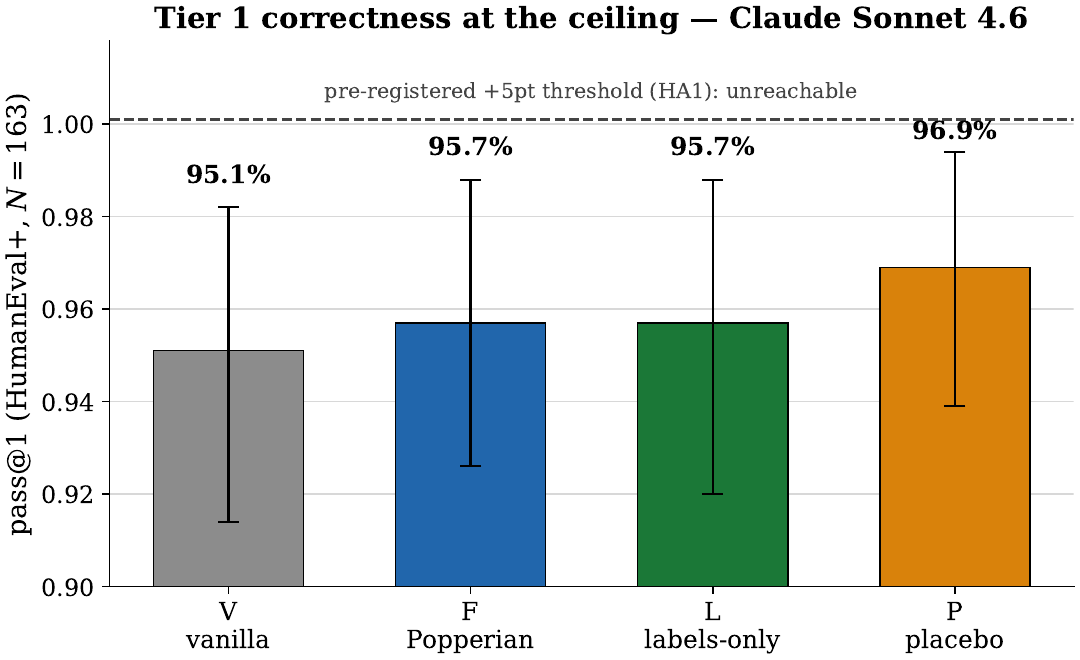}
\caption{Tier-1.A2 high-capability ceiling (Claude Sonnet~4.6). Bars are $\passat{1}$ with
bootstrap $95\%$ CIs; the dashed line marks the pre-registered $+5$-point threshold over
vanilla, which sits at $\approx\!100\%$ and is therefore unreachable at this benchmark.
\textbf{What it means:} the full skill (F) neither beats vanilla nor the controls; the
length-matched placebo (P) is numerically best, and F matches labels-only (L)---the full
Popperian content shows no separable benefit over a labels-only scaffold when correctness is
measured by execution (a content-beyond-labels reading, not a pure-vocabulary one;
Proposition~\ref{prop:ident}).}
\label{fig:ceiling}
\end{figure}

The low-capability tier (Tier~2, $N{=}164$, Qwen2.5-Coder-0.5B) is where a real effect could
have appeared, because the $0.5$B model is far from any ceiling, and it is the low-baseline
screen our design depends on---a more sensitive setting than the ceiling-limited high tier. Table~\ref{tab:tier2} and Figure~\ref{fig:tier2} report the eight arms.
Two quantities must be kept distinct. The single-sample rate $\passat{1}$ (greedy) is the deployment-relevant number (what a user would receive from one generation) whereas $\passat{8}$ is an \emph{oracle-selected} best-of-eight quantity that measures the model's
latent capacity to produce at least one correct program in eight samples; it is not an output
a user could obtain without an external selector. With that distinction in place: every
structured arm delivers a large best-of-eight lift, $\arm{F}{8}=56.7\%$ versus
$\arm{V}{8}=34.8\%$, a $+22.0$-point gain (McNemar $p\approx2\times10^{-7}$, discordant $43/7$;
paired $95\%$ CI $[+14.0,+29.9]$), confirming the pre-registered HB1. The greedy single-sample
rate is $\arm{F}{1}=37.2\%$, so best-of-$8$ sampling supplies $+19.5$ points of headroom (HB4,
which narrowly \emph{misses} its $+20$-point threshold at $+19.51$~pt, paired CI
$[+13.4,+26.2]$, so we record it as not met rather than confirmed). The incremental Popperian
content again shows no separable benefit: the aggregate best-of-eight rate of labels-only
coincides with the full skill ($\arm{L}{8}=\arm{F}{8}=56.7\%$, $93/164$ each), though the two
conditions disagree on $24$ individual problems (Jaccard $0.77$, McNemar $p=1.0$), so they do
not separate rather than being literally identical; the length-matched placebo trails by only
$2.4$ points ($\arm{P}{8}=54.3\%$, paired $p=0.60$), so HB2 (the skill beats the placebo by
$\ge5$ points) is not met---though its paired CI $[-4.3,+9.2]$ does not exclude a $+5$-point
effect, so we treat HB2 as non-confirmatory rather than as ruling such an effect out. The lift
is consistent with the presence of \emph{any}
generation-discipline scaffold rather than to severity, falsifiability, or predictive-risk
framing---in the low-baseline setting where, with ample headroom, a real content effect
should have been easiest to see. We note one gap in this comparison: single-sample
($\passat{1}$) rates were collected only for V and F, not for L or P, so the
deployment-relevant labels/placebo comparison is available at $\passat{8}$ only
(\S\ref{sec:discussion}).

The self-judge result is the most diagnostic. When the same $0.5$B model rates the full
skill's eight samples with the Popperian rubric and we keep its top pick, the selected sample
passes $25.6\%$ of the time ($\Fjudge$)---at essentially the random-selection level rather
than above it: a single seeded random draw from the same eight passes $26.8\%$ ($\Frand$) and
the analytic expectation of a uniform single draw---the mean over problems of the fraction of
the eight samples that pass---is $24.9\%$. The $1.2$-point gap to the seeded
draw is within noise (McNemar $p\approx0.82$, discordant $9/11$; paired CI $[-6.7,+4.3]$), so
HB3 (self-judge beats random by $\ge5$ points) is not supported---the structured-reward signal
does not improve on chance selection, but neither is it reliably \emph{worse}, so we avoid the
stronger ``anti-informative'' reading. The selection pattern is consistent with position bias
(Figure~\ref{fig:bias}): $60\%$ of judge calls pick sample index~$1$, far above the $12.5\%$ a
content-blind uniform selector would place there, and the three most frequent rating patterns
account for $119/164$ ($73\%$) of all calls. We are careful here: because the eight candidates
were presented to the judge in a fixed sample order (the greedy sample first), an order
confound cannot be separated from genuine position bias without an order-randomized
replication, which we did not run; we therefore report the concentration as \emph{evidence
consistent with} position bias rather than as identified position bias (\S\ref{sec:discussion}).
The pattern is, at small scale, an instance of the verifier--generator gap documented for larger
models that cannot reliably self-correct without an external signal
\citep{huang2024cannot,valmeekam2023self,zhang2025darkside}.

\begin{table}[t]
\centering\small
\begin{tabular}{@{}lccc@{}}
\toprule
\textbf{Arm (Qwen-0.5B)} & \textbf{pass} & \textbf{95\% CI} & $\Delta$ vs.\ $\arm{V}{8}$ \\
\midrule
$\arm{V}{1}$ (greedy)              & 0.055 & $[0.024, 0.092]$ & --- \\
$\arm{V}{8}$ (best of 8)           & 0.348 & $[0.274, 0.421]$ & baseline \\
$\arm{F}{1}$ (full, greedy)        & 0.372 & $[0.299, 0.445]$ & --- \\
$\arm{F}{8}$ (full skill)          & \textbf{0.567} & $[0.488, 0.640]$ & $+22.0$ \\
$\arm{L}{8}$ (labels-only)         & \textbf{0.567} & $[0.488, 0.640]$ & $+22.0$ \\
$\arm{P}{8}$ (placebo)             & 0.543 & $[0.463, 0.622]$ & $+19.5$ \\
$\Frand$ (random of 8)             & 0.268 & $[0.201, 0.335]$ & $-8.0$ \\
$\Fjudge$ (self-judge of 8)        & 0.256 & $[0.189, 0.323]$ & $-9.2$ \\
\bottomrule
\end{tabular}
\caption{Tier-2 pass rates on HumanEval+ ($N{=}164$, generator and self-judge
Qwen2.5-Coder-0.5B). Structured arms lift best-of-eight correctness by roughly $20$--$22$
points over $\arm{V}{8}$ (F and L by $+22.0$, the placebo by $+19.5$); $\arm{F}{8}$ and
$\arm{L}{8}$ are equal in aggregate ($56.7\%$) but disagree on $24$ individual problems, so
they are non-separating rather than literally identical at the problem level; and the self-judge
$\Fjudge$ does \emph{not} beat the random draw $\Frand$.}
\label{tab:tier2}
\end{table}

\begin{figure}[t]
\centering
\includegraphics[width=0.86\linewidth]{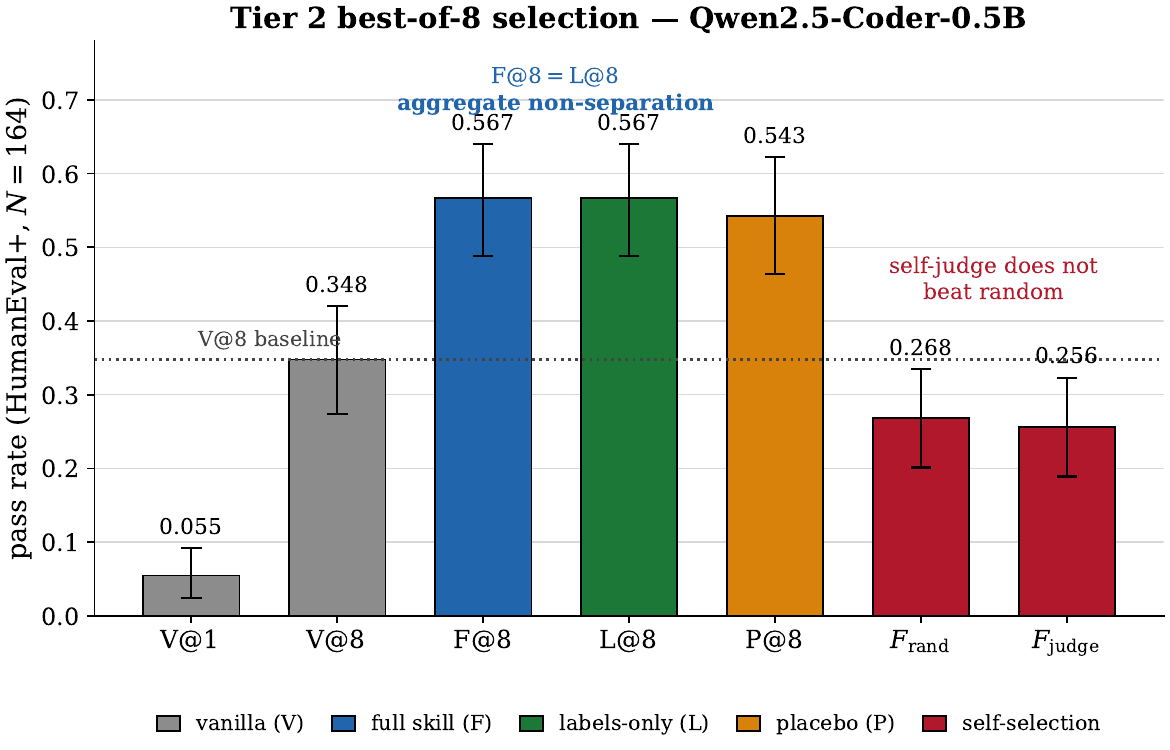}
\caption{Tier-2 pass rates with bootstrap $95\%$ CIs across the seven sampled arms
(Qwen2.5-Coder-0.5B); the greedy $\arm{F}{1}$ rate is listed in Table~\ref{tab:tier2}. The
dotted line is the $\arm{V}{8}$ best-of-8 baseline. \textbf{What it means:} every structured
arm lifts $\approx\!+20$ points over vanilla, but the full skill and labels-only are equal in
aggregate ($\arm{F}{8}{=}\arm{L}{8}{=}56.7\%$, though they differ on $24$ problems) and the
placebo is not significantly lower; since the labels-only and placebo controls reproduce the
lift, it is carried by scaffold structure rather than by the Popperian content. The two
self-selection arms fall \emph{below} the best-of-8 baseline, and the self-judge is no better
than a random draw.}
\label{fig:tier2}
\end{figure}

\begin{figure}[t]
\centering
\includegraphics[width=0.80\linewidth]{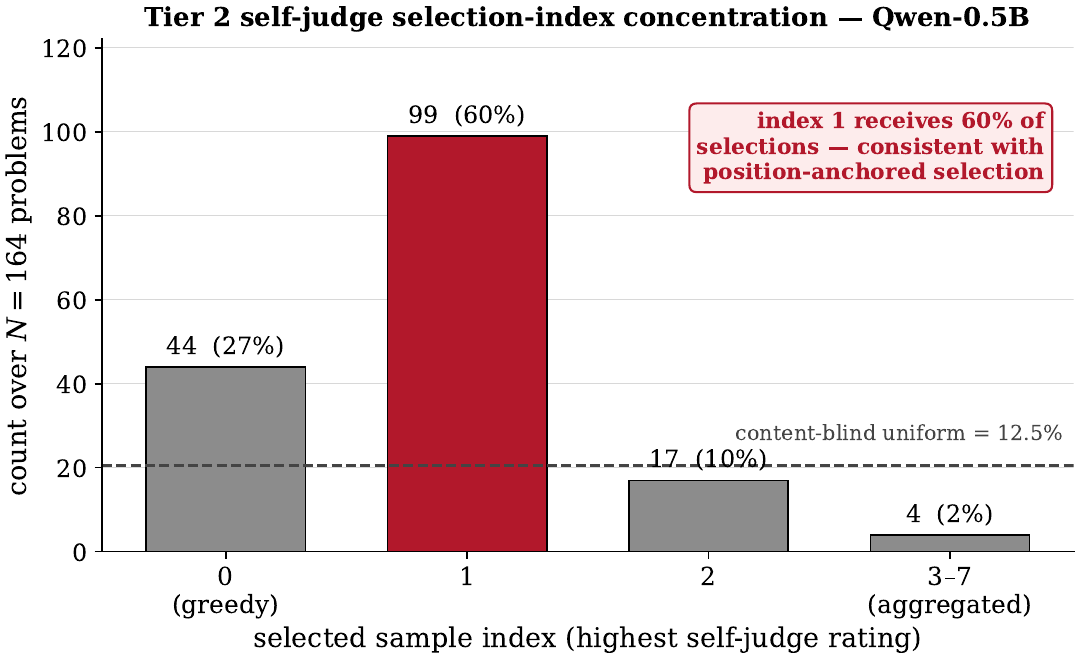}
\caption{Tier-2 self-judge selection-index concentration (a pattern consistent with position
bias): which of the eight samples the 0.5B self-judge rated highest, over $N{=}164$ problems. The dashed line is the $12.5\%$ a content-blind
uniform selector would place on any single index. \textbf{What it means:} the judge
concentrates $60\%$ of its selections on index~$1$; combined with three stereotyped rating
patterns covering $73\%$ of calls, this is consistent with position-anchored rather than
content-conditioned selection (an order-randomized replication, not run here, would be needed
to identify position bias cleanly), consistent with $\Fjudge$ not beating $\Frand$.}
\label{fig:bias}
\end{figure}

Table~\ref{tab:paired} reports the paired statistics behind these verdicts, computed directly
from the released per-problem pass matrices. The resampling unit is the \emph{problem}: each of
$10{,}000$ paired bootstrap resamples draws problems with replacement (seed $20260601$),
clustering the two conditions' outcomes by problem, and the reported interval is the resulting
$2.5$--$97.5$ percentile range of the paired difference; McNemar's exact two-sided binomial
test uses the discordant pairs only. We apply \emph{no} multiple-comparison correction, so each
$p$-value should be read per-contrast. The pre-registered equivalence margin is $\delta=5$
points. We use the word ``equivalent'' only when a paired interval lies inside $(-\delta,+\delta)$:
this holds for every Tier-1 contrast (all six pairwise intervals fall within $\pm5$ points, so
F, L, P, and V are statistically equivalent at the $\pm5$-point margin), but it does \emph{not}
hold for
either Tier-2 contrast: the $\arm{F}{8}$-versus-$\arm{L}{8}$ interval $[-6.1,+6.1]$ slightly
exceeds $\pm5$, and the $\arm{F}{8}$-versus-$\arm{P}{8}$ interval $[-4.3,+9.2]$ exceeds it at the
high end, so a $+5$-point skill-over-placebo effect is not formally ruled out there. We
therefore describe both Tier-2 pairs as ``not separating'' rather than as equivalent.

\begin{table}[t]
\centering\small
\setlength{\tabcolsep}{5pt}
\begin{tabular}{@{}lrcrc@{}}
\toprule
\textbf{Contrast} & $\widehat{\Delta}$\,(pt) & \textbf{disc.\ $b/c$} & \textbf{McNemar $p$} & \textbf{paired 95\% CI (pt)} \\
\midrule
\multicolumn{5}{@{}l}{\emph{Tier 1 --- Claude Sonnet 4.6, $\passat{1}$, $N{=}163$ paired}}\\
$\textup{F}-\textup{V}$ & $+0.6$ & $3/2$  & $1.00$ & $[-1.8,\ +3.1]$ \\
$\textup{F}-\textup{L}$ & $0.0$  & $2/2$  & $1.00$ & $[-2.5,\ +2.5]$ \\
$\textup{F}-\textup{P}$ & $-1.2$ & $1/3$  & $0.62$ & $[-3.7,\ +1.2]$ \\
$\textup{L}-\textup{V}$ & $+0.6$ & $2/1$  & $1.00$ & $[-1.2,\ +3.1]$ \\
$\textup{L}-\textup{P}$ & $-1.2$ & $0/2$  & $0.50$ & $[-3.1,\ \phantom{+}0.0]$ \\
$\textup{P}-\textup{V}$ & $+1.8$ & $4/1$  & $0.38$ & $[-0.6,\ +4.9]$ \\
\midrule
\multicolumn{5}{@{}l}{\emph{Tier 2 --- Qwen2.5-Coder-0.5B, $\passat{8}$ unless noted, $N{=}164$ paired}}\\
$\arm{F}{8}-\arm{V}{8}$ & $+22.0$ & $43/7$  & $2\times10^{-7}$  & $[+14.0,\ +29.9]$ \\
$\arm{F}{8}-\arm{L}{8}$ & $0.0$   & $12/12$ & $1.00$           & $[-6.1,\ +6.1]$ \\
$\arm{F}{8}-\arm{P}{8}$ & $+2.4$  & $18/14$ & $0.60$           & $[-4.3,\ +9.2]$ \\
$\Fjudge-\Frand$        & $-1.2$  & $9/11$  & $0.82$           & $[-6.7,\ +4.3]$ \\
$\arm{F}{8}-\arm{F}{1}$ & $+19.5$ & $32/0$  & $5\times10^{-10}$& $[+13.4,\ +26.2]$ \\
\bottomrule
\end{tabular}
\caption{Paired-contrast statistics recomputed from the released per-problem pass matrices
($N{=}163$ paired problems for Tier~1, $N{=}164$ for Tier~2). $\widehat{\Delta}$ is the paired
difference in pass rate (points); ``disc.\ $b/c$'' gives the McNemar discordant counts, where
$b$ is the number of problems the \emph{first}-named condition passes and the second fails, and
$c$ the reverse; $p$ is McNemar's exact two-sided binomial test; the CI is a problem-level
paired bootstrap ($10{,}000$ resamples, seed $20260601$). No multiple-comparison correction is
applied. Every Tier-1 interval lies within the pre-registered $\pm5$-point equivalence margin;
the Tier-2 $\arm{F}{8}$ vs.\ $\arm{L}{8}$ interval slightly exceeds it, so that pair is reported
as non-separating rather than equivalent. Paired differences and discordant counts are computed from the raw per-problem pass matrices, not from the rounded marginal rates, so a contrast may differ slightly from the difference of the displayed rates.}
\label{tab:paired}
\end{table}

Table~\ref{tab:verdicts} collects the eight pre-registered hypotheses and their verdicts.
Three were confirmed, one confirmed with the sign reversed (HA2), two are not supported with the
margin also statistically excluded (HA1, ceiling-limited; HB3), and two are not met---the
criterion unreached but the margin \emph{not} statistically excluded (HB2 and HB4); the
non-confirmations are not noise but the firing of probes committed before the data existed. The method delivered signal in both directions: it confirmed
that structure helps a weak model and that the judge is reproducible, and it found no separable
correctness benefit from the full Popperian content over a labels-only scaffold, no advantage
over a placebo, and no usable reward signal from a $0.5$B self-judge.

\begin{table}[t]
\centering\small
\begin{tabular}{@{}llcl@{}}
\toprule
\textbf{ID} & \textbf{Pre-registered hypothesis} & \textbf{Observed} & \textbf{Verdict} \\
\midrule
HA1 & $\textup{F}-\textup{V}\ge +5$ pt (skill helps, Sonnet 4.6)        & $+0.6$ pt & \textbf{not supported}$^{\dagger}$ \\
HA2 & $\textup{F}-\textup{P}\le +3$ pt (placebo confound)               & $-1.2$ pt & confirmed \\
HA3 & per-problem F/P overlap (Jaccard $\ge 0.7$)                       & $0.975$   & confirmed \\
HA4 & $\textup{L}-\textup{V}\le +5$ pt (labels $\approx$ vanilla)       & $+0.6$ pt & confirmed \\
HB1 & $\arm{F}{8}-\arm{V}{8}\ge +10$ pt (structure helps, Qwen-0.5B)    & $+22.0$ pt& confirmed \\
HB2 & $\arm{F}{8}-\arm{P}{8}\ge +5$ pt (content beats placebo)          & $+2.4$ pt & \textbf{not met}$^{\ddagger}$ \\
HB3 & $\Fjudge-\Frand\ge +5$ pt (self-judge $>$ random)                 & $-1.2$ pt & \textbf{not supported} \\
HB4 & $\arm{F}{8}-\arm{F}{1}\ge +20$ pt (best-of-$n$ headroom)          & $+19.5$ pt& not met (CI incl.\ $20$) \\
\bottomrule
\end{tabular}
\caption{Pre-registered hypotheses and verdicts across both tiers (Tier~1: Claude Sonnet~4.6,
$N{=}163$; Tier~2: Qwen2.5-Coder-0.5B, $N{=}164$; HumanEval+). All verdicts follow from probes
fixed in public version control before data collection. We separate two non-confirmatory
outcomes. ``Not supported'' means the observed difference is below the margin \emph{and} the
upper paired CI excludes the margin (the effect is also statistically excluded; Table~\ref{tab:paired}):
this applies to HA1 and HB3 (HB3's upper CI is $+4.3$, below $+5$). ``Not met'' means only that
the pre-registered success criterion was not reached, \emph{without} the margin being
statistically excluded. $^{\ddagger}$HB2 is not met descriptively ($+2.4 < +5$), but its paired
CI $[-4.3,+9.2]$ does \emph{not} exclude a $+5$-point effect, so we treat it as non-confirmatory
rather than as ruling such an effect out; HB4 ($+19.51$ vs.\ a $+20$ threshold, paired CI
including $20$) is likewise not met. $^{\dagger}$HA1 is not supported, but the rejection is
\emph{ceiling-limited}: with vanilla at $95.1\%$, the $+5$-point threshold is near-unreachable,
so this is a non-detection rather than a demonstration of zero effect. HA3 reports the
F-vs-P per-problem pass-set Jaccard at Tier~1 ($0.975$). Point differences are rounded to $0.1$
point.}
\label{tab:verdicts}
\end{table}

\section{Discussion}
\label{sec:discussion}

The framework asks whether a methodology-flavored prompt skill
carries content or only its labels, and its three controls each neutralize a specific rival
explanation: the labels-only condition removes the \emph{structure} confound by
holding the scaffold fixed while stripping the procedure; the length-matched placebo removes
the \emph{length and tone} confound; and the execution oracle removes the \emph{judge-bias}
confound that would otherwise let recognizable vocabulary inflate the score. Run across two
capability tiers, the design returned a consistent verdict: once the labels-only scaffold is
held fixed and correctness is measured by execution, the full skill's Popperian procedural
content shows no separable correctness benefit. At the ceiling, F, L, P, and V sit within two
points (statistically equivalent at the $\pm5$-point margin); at the floor, F and L do not
separate (aggregate $\arm{}{8}$ identical, paired $p=1.0$) and P is not significantly lower. The
one place the skill ``worked'' under best-of-eight evaluation---the weak model's $+22$-point
lift---was matched in aggregate by the labels-only scaffold; the corresponding single-sample
labels/placebo comparison ($\arm{L}{1}$, $\arm{P}{1}$) was not collected.

Why a Popperian \emph{vocabulary} might be expected to help (and why it nonetheless need
not) is sharpened by the very philosophy the skill borrows. The advantage that falsificationism gives a methodology is specific: in the
error-statistical reconstruction, evidence warrants a hypothesis only when the test was
\emph{severe}---when it would very probably have failed the hypothesis had the hypothesis been
false (\eqref{eq:sev}; \citealp{mayospanos2006severe,mayo2025severe}). Severity is what
separates a program that \emph{passed sincere tests} from one that merely \emph{was asserted
to be correct}, and a skill that could induce a model to construct severe tests against its
own code would be genuinely valuable. The limit, equally precise, is that \emph{naming}
severity is not \emph{having} it. Lakatos's critique of ``naive falsificationism'' is exactly
that attaching the vocabulary of refutation to an inference, without the operational
scaffolding that points the modus tollens at a real target, yields a label and not an
operation \citep{lakatos1968criticism}; and Popper himself insisted that corroboration is not
a probability of truth \citep{popper1959logic,shea2019popper}, so invoking it confers no
inductive warrant. Our empirical finding is the engineering counterpart of this
point: the prompt supplies the words of severity but neither the test distribution nor the
error probability $\alpha$ of \eqref{eq:sev} that would give them force, and an LLM consuming
those words gains only the discipline of a scaffold.

The same conclusion can be stated in the language of the estimands, which also clarifies why
the oracle beats the judge and why the self-judge fails. The quantity an engineer cares about
is the content-beyond-labels effect $\Delta_{\mathrm{content}}$ of \eqref{eq:content}; our data
put it at $0.0$ points at high capability and $0.0$ at low capability, with paired intervals
that lie within the pre-registered $5$-point margin at the high tier (Table~\ref{tab:paired}).
A rubric-only study cannot recover this number, because the judge's response is approximately
$\pi(c)+h(c)$ where $h$ is the halo term of \eqref{eq:halo}; with $\rho=0.714$ the halo is
large enough to manufacture an apparent content effect from labels alone, which is precisely
what the monotone rubric ladder of Figure~\ref{fig:ablation} would have shown a naive reader. The execution oracle sets $h\equiv0$ by construction, which is why
we trust it. The self-judge result has a similar formal cause: test-time reward selection is
useful only when the selector's accuracy exceeds chance, and a verifier no stronger than the
generator has no such edge. This is the small-scale instance of a documented gap (LLM
verifiers accept invalid solutions at high rates \citep{valmeekam2023self}, generation skill
does not transfer to judgment skill \citep{zhao2025codejudgeeval}, and intrinsic
self-correction degrades reasoning accuracy without an external signal
\citep{huang2024cannot,zhang2025darkside}) and it predicts that useful self-selection needs a
verifier strictly stronger than the generator, or an external oracle, exactly the ingredient
that lifts code agents in practice \citep{chen2024selfdebug,wang2025advancedpe}.

The two-tier design also lets us reason carefully about transfer, which is the practical
payoff of testing on a weak model. Prompt-surface effects are capability-modulated:
\citet{wang2025advancedpe} document that the value of structured prompting collapses on
advanced models once baseline capability saturates the benchmark, while
\citet{zhou2024lessreliable} show that larger, more instructable models can become \emph{less}
reliable overall---so neither upward nor downward transfer can be assumed for free, and
\citet{kojima2022zeroshot} remind us that a reasoning trigger that unlocks large gains at
scale can be inert on small models. We use the weak tier accordingly, as a \emph{negative
screen} rather than a guarantee of transfer: structure clears that screen by a wide margin ($+22$
points), demonstrating the testbed is sensitive, while the incremental Popperian content does
not move it. The honest inference is asymmetric and we state it as such: the absence of a
separable content effect in this low-baseline setting \emph{weakens} the claim that the
Popperian procedural content is the active ingredient (claiming it \emph{as} the active
ingredient would now require independent positive evidence of transfer that we do not
have), but it does \emph{not} rule out capability-dependent effects in other models, tasks, or
skill implementations. We likewise do \emph{not} claim the converse, that the content would
help a large model, since the ceiling argues against detecting any such effect there in the
first place.

Table~\ref{tab:compare} situates our result against the closest prior studies along the axis
that matters: which confound-removing controls each one deploys, and what each obtained.
Two patterns are visible. First, the field has repeatedly found, in adjacent settings, that
surface form moves judges and that prompt \emph{structure} rather than wording moves
correctness---our result is continuous with \citet{fagadau2024prompt},
\citet{sclar2024formspread}, \citet{akli2026underspecification}, and
\citet{khojah2025impact}, and extends them by naming a specific, popular skill family and
showing that its incremental Popperian content does not separate from a labels-only control. Second, and this is where we
differ, prior code-generation studies rarely combine a length-matched placebo, a labels-only
scaffold, and an execution oracle, and we are not aware of prior prompt-skill evaluations
pairing these controls with a small-model self-judge audit; that combination is our
methodological contribution, and it is why our negative result is
harder to dismiss than a single-control study. We are equally explicit about where others are
\emph{stronger}: studies such as \citet{fagadau2024prompt} (a $124{,}800$-prompt factorial)
and \citet{vaugrante2024replication} (six models, five benchmarks) have far greater external
validity than our two substrates and one benchmark family. Where we are stronger is in
isolation, not breadth: none of these holds a labels-only scaffold fixed while varying the
full Popperian content (our $\textup{F}$-versus-$\textup{L}$ non-separation), adds a
vocabulary-halo sentinel to catch the judge confound directly, or audits a self-judge at the
$0.5$B scale where reward selection is most tempting and, we find, least informative. Our placebo design also imports a
control (length-matched sham text) that is standard outside code generation
\citep{lukassen2026quality,sun2026persuaded} but, as far as we are aware, novel for an LLM-judged
coding skill.

\begin{table}[t]
\centering\footnotesize
\setlength{\tabcolsep}{4pt}
\renewcommand{\arraystretch}{1.18}
\begin{tabular}{@{}lp{6.4cm}ccc@{}}
\toprule
\textbf{Study} & \textbf{Setting and key quantitative finding} & \textbf{Plac.} & \textbf{Lab.} & \textbf{Orac.} \\
\midrule
\citet{wang2024faireval}         & QA judging; order-swap flips $82.5\%$ (ChatGPT) / $46.3\%$ (GPT-4) of verdicts & --- & --- & --- \\
\citet{wu-aji-2025-style}        & QA judging; longer minor-error answer beats correct-short (Elo $1206{>}1096$) & --- & --- & --- \\
\citet{moon2026judgecover}       & Code judging; cosmetic edits shift verdicts up to $26.7$ pt (GPT-4o) & --- & --- & \checkmark \\
\citet{zhao2026biasloop}         & Code judging; cue injection drags accuracy $77.5\%{\to}61.0\%$ & --- & --- & \checkmark \\
\citet{sclar2024formspread}      & Formatting-only changes move accuracy up to $76$ pt (LLaMA-2-13B) & --- & --- & --- \\
\citet{fagadau2024prompt}        & $124{,}800$ Copilot prompts; examples $p{=}10^{-4}$ (sig.), mood/tense n.s. & --- & --- & \checkmark \\
\citet{akli2026underspecification}& Vocabulary mutation $-7.1\%$ on thin HumanEval vs.\ $-0.2\%$ on rich LCB & --- & --- & \checkmark \\
\citet{khojah2025impact}         & Persona/CoT over a scaffold: no significant correctness gain & --- & --- & \checkmark \\
\citet{vaugrante2024replication} & 5 techniques $\times$ 6 models; CoT $\approx\!0\%$ ($p{=}.8$), few exceptions & --- & --- & --- \\
\citet{huang2024cannot}          & Intrinsic self-correction degrades reasoning (GSM8K $75.9{\to}74.7$) & --- & --- & --- \\
\citet{valmeekam2023self}        & Self-verifier accepts $84.5\%$ of invalid plans (PDDL planning) & --- & --- & --- \\
\midrule
\textbf{This work, Tier 1}       & Sonnet 4.6 at ceiling; $\textup{F}{=}\textup{L}{=}\textup{P}{=}\textup{V}$ within $2$ pt ($N{=}163$) & \checkmark & \checkmark & \checkmark \\
\textbf{This work, Tier 2}       & Qwen-0.5B; $+22$ pt structure, $\arm{F}{8}{=}\arm{L}{8}$ (aggregate), self-judge $\approx$ random & \checkmark & \checkmark & \checkmark \\
\bottomrule
\end{tabular}
\caption{Comparison with prior work by confound-removing controls used: length-matched
\textbf{Plac.}ebo, \textbf{Lab.}els-only structural control, and \textbf{Orac.}le. Here
\textbf{Orac.}\ marks studies whose correctness ground truth is established by code
execution---whether the object of study is a code generator or a code judge benchmarked against
executable correctness; studies that score reasoning or planning benchmarks (exact-match or a
formal verifier) and rubric-only judging on non-code tasks are not execution-grounded and are
marked ``---''. Prior studies establish that
judges reward surface form and that structure (not wording) moves correctness, but none
combines all three controls with a small-model self-judge audit. Our negative result inherits
its strength from that combination; its weakness is narrower external validity (two
substrates, one benchmark family).}
\label{tab:compare}
\end{table}

Two boundaries of the design deserve to be named precisely, because they bound what the
contrasts can and cannot say. First, $\textup{F}-\textup{L}$ estimates the effect of the
\emph{full} Popperian content over a labels-only scaffold, not of terminology in isolation
(Proposition~\ref{prop:ident}); isolating vocabulary as such would require a further family of ablations we did not run (the same procedure phrased in generic, non-Popperian terminology;
the Popperian terminology with the procedure removed; a neutral filler of matched length; the
same headings carrying different behavioral instructions; the same procedure under several
jargon families; and token-count-matched variants of L, F, and P) each holding procedure,
length, or behavioral content fixed while varying only the words. Second, the execution-oracle
comparisons are complete only for V, L, F, and P: the intermediate rungs LD and LDS and the
anti-skill ANTI were scored only at the exploratory rubric stage, and single-sample
($\passat{1}$) rates were collected only for V and F, so the deployment-relevant labels/placebo
comparison and the full additive ladder under the oracle remain to be measured. Because
$\arm{L}{1}$ and $\arm{P}{1}$ were not collected, the single-sample deployment-relevant source
of the $\arm{F}{1}$ gain remains unresolved, and the F/L/P dissection is currently available at
best-of-eight only. We list these as concrete next experiments, since each
could sharpen---or overturn---a specific reading above.

These results carry practical lessons for anyone measuring a prompt skill. The most important
is that a length-matched placebo should be a default control rather than an afterthought: a
rubric gain that evaporates against inert text of equal length was never a skill-content effect
to begin with \citep{lukassen2026quality,sun2026persuaded}. A correctness oracle should sit
alongside any rubric, because, as our $\rho=0.714$ sentinel shows, an LLM judge can
manufacture an apparent vocabulary effect out of labels with no code behind them, and only
execution closes that gap. The vocabulary-halo sentinel itself is worth adopting as a cheap
standing probe, since it exposes precisely the confound that an otherwise excellent intra-rater
reliability would conceal. And where a skill proposes to select among its own outputs, the
self-judge result is a caution against same-capacity grading: the verifier has to be stronger
than the generator, or external, or it collapses into position.

Finally, the strongest claims are bounded to the tested settings, and we state those bounds
precisely here. The high-capability tier is
a single model at a near-ceiling benchmark, which \emph{structurally} caps any detectable
effect; we report an equivalence there, not a proof of zero, and we are careful not to read an
absence of evidence as decisive evidence of absence. The low-capability tier is a single 0.5B
substrate with one seed budget ($8$ samples $\times\,164$ problems $\times\,4$ conditions), and
the self-judge is tested only at $0.5$B; whether a stronger judge recovers signal is untested
and is the natural next experiment. The Tier-1.A1 rubric used a single primary judge, with only
small-sample robustness support---isolate-rejudge ($r{=}0.998$, $n{=}5$), an Opus secondary
judge ($r{=}0.935$, $n{=}6$), and blind re-authoring---and some early per-cell raw scores were
lost before an archival pattern was adopted (aggregates survive); the three-judge cross-rater
$\alpha=0.544$ is itself below Krippendorff's conventional threshold. For all these reasons the
rubric stage is exploratory and no correctness-bearing claim rests on it. We exclude HumanEval/$32$ from the high-capability surface owing to an
augmented-test bug, and ``vanilla'' baselines in agentic settings can be silently contaminated
by operator scaffolds and system prompts \citep{neumann2025position,turpin2023language}, which
if anything would \emph{shrink} the apparent benefit of an added skill and so does not threaten
the negative direction. Pre-registration rests on the project's git commit chain; we provide the
commit identifiers but recommend an external timestamp for replication. We report the negative
result precisely because the field's documented bias against null findings would otherwise
leave the question open \citep{birhane2022values,smith2022realml}.

\section{Conclusion}
\label{sec:conclusion}

We set out to decide whether a coding skill that attaches Popper's
vocabulary (severity, falsifiability, predictive risk) to prompts improves the code or only its
apparent quality. The method was a pre-registered, two-tier ablation with three controls: a
labels-only condition that fixes the scaffold while removing the procedure, a length-matched
placebo that fixes length and tone, and an execution oracle that scores correctness by unit
tests rather than by a biased rubric---all audited by a reliability stage with a
vocabulary-halo sentinel, and anchored on a small model where a real effect would be easiest to
detect. The study evaluates one prompt-skill implementation; it is not a test of Popperian
methodology, whose meta-level use was in fact load-bearing in our own design.

The result is a calibrated, setting-specific negative. On the frontier model
(Claude Sonnet~4.6) the full skill matched vanilla, labels-only, and the placebo to within two
points on $163$ HumanEval+ problems---statistically equivalent at the $\pm5$-point margin---so
the pre-registered $+5$-point improvement hypothesis was not supported, with the caveat that a
$95.1\%$ baseline makes that threshold a ceiling-limited target rather than a fair test. On the
small model (Qwen2.5-Coder-0.5B) structured arms lifted best-of-eight correctness by roughly
$20$--$22$ points, but the full skill's Popperian content showed no separable benefit over a
labels-only scaffold (aggregate $\arm{F}{8}{=}\arm{L}{8}$) and a placebo difference that was not
statistically significant, while a same-model self-judge applying the Popperian rubric did not
beat random selection and concentrated its picks on one index, a pattern consistent with
position bias. The strongest positive finding is that a structured scaffold substantially helps
a weak model under best-of-eight evaluation; the strongest negative is that the full Popperian
skill does not separate from a labels-only control in aggregate best-of-eight; and the most
practical contribution is the disambiguation protocol itself (placebo $+$ labels-only $+$
execution oracle $+$ halo sentinel), which we recommend as a default for prompt-skill claims. In
the settings tested, the measurable best-of-eight gains are carried by scaffolded generation
rather than by the incremental Popperian procedural content, but the result remains bounded by
one benchmark family, two serving regimes, and incomplete single-sample labels/placebo
measurements.

The natural next step turns the method on a sharper target. If the present skill showed no
effect because it \emph{names} severe tests without \emph{executing} them, its obvious successor
is a skill built around \emph{executable} falsification---generating, running, and selecting
against real counterexamples rather than rubric impressions---under the same pre-registered,
oracle-grounded protocol, and screened first on the same low-baseline small-model tier. We are
developing this successor as a companion line of work; the same disambiguation protocol used
here---placebo, labels-only, execution oracle, and halo sentinel---should be reused so that any
measured gain is attributed to executable testing rather than to additional rhetoric. Whether
that sharper skill clears the screen the present one did not is a question we leave to that
future evaluation.

\section*{Reproducibility and Data Availability}
\addcontentsline{toc}{section}{Reproducibility and Data Availability}

The data, code, and the Popperian coding skill under test are available in the project repository,
which you can access at \url{https://github.com/PhiniteLab/popperian-coding-skill}.

\section*{Ethics and Generative-AI Use Disclosure}
\addcontentsline{toc}{section}{Ethics and Generative-AI Use Disclosure}

This study evaluates an AI coding skill using AI generators and, in the Tier-0/Tier-1.A1
reliability stage, an AI judge. The skill under test was authored by an LLM and is the
\emph{object} of measurement, not a methods tool. The study's design, hypotheses, pre-registration, literature review, statistical
analysis, and claim calibration are the author's own work. A generative-AI assistant was used
only to improve the \emph{readability} of the manuscript prose (copy-editing and phrasing); it
did not design the experiments, produce or select the results, choose the citations, or set any
conclusion. No AI system is
listed as an author.

\section*{Supplementary Materials and Planned Appendices}
\addcontentsline{toc}{section}{Supplementary Materials and Planned Appendices}

The repository (see \emph{Reproducibility}) already contains the per-problem pass/fail matrices
for both tiers, the condition prompts (V, L, LD, LDS, F, P, ANTI), the $164$ self-judge rating
arrays with selected indices, the deterministic oracle, and the bootstrap and figure scripts.
The following analyses are recommended as appendices before journal submission and are
\emph{not yet} compiled into the manuscript: (i) a prompt-token-count table for all conditions
(needed to make the length-matching of P quantitative); (ii) a full per-problem
$\textup{F}/\textup{L}/\textup{P}$ overlap (Jaccard) table beyond the headline values reported
here; (iii) an error taxonomy of failures (syntax error, extraction failure, wrong entry point,
timeout, assertion failure, type error, partial correctness); (iv) the self-judge
rating-pattern frequency table (the three modal patterns already account for $73\%$ of calls);
and (v) a concrete order-randomized shuffle-test protocol for the self-judge, which would
convert the position-bias observation from ``consistent with'' to ``identified.'' These are
listed explicitly so that the gap between what is released and what remains to be added is
visible to the reader.

\bibliography{references}

\appendix
\section{The Popperian coding skill under test (condition F)}
\label{app:skill}

For full reproducibility we reproduce, verbatim, the Markdown skill that defines the full
condition~F: it is loaded as the coding agent's system prompt and is the object of
measurement throughout this study. Listing~\ref{lst:skill} is the skill's entry-point file
(\texttt{SKILL.md}); under F it routes to a generation-mode file, and the four mode files
(\texttt{generation}, \texttt{review}, \texttt{verification}, \texttt{test\_construction}) ship in
the ancillary \texttt{anc/} directory of this submission, with the entry-point file reproduced
below; the helper scripts and the full loadable package are available in the stand-alone repository at \url{https://github.com/PhiniteLab/popperian-coding-skill}. The labels-only control~L retains
this file's section headers while removing the procedure beneath them; the length-matched
placebo~P replaces the Popperian content with generic best-practice text of comparable length;
neither is reproduced here but both are included in the repository. The skill's
structured-reporting step---the \texttt{report\_falsification\_result} tool referenced in the
listing---is part of the loadable package but is \emph{not} invoked in the HumanEval+ generation
harness used for the oracle evaluation: correctness in Tiers~1.A2 and~2 is decided by the
deterministic \texttt{pytest} oracle on the function extracted from each generation
(Algorithm~\ref{alg:eval}), independently of that tool, which is accordingly not among the
file/shell tools declared under the skill's \texttt{allowed-tools}. We stress, consistent
with the body of the paper, that this skill is the specific implementation we evaluate, not
Popperian methodology in general.

\begin{lstlisting}[style=skill,caption={The \texttt{popperian-coding} skill entry point
(\texttt{SKILL.md}), reproduced verbatim; the object under test in condition~F.},
label={lst:skill}]
---
name: popperian-coding
description: |
  Switches the coding agent from verification ("does my answer look right?")
  to falsification ("what severe test would break my answer if it were wrong?").
  Applies to four coding modes — code review, code generation, task verification,
  and test construction — and enforces three contracts: severity-scored test
  design, N-of-N severe-test survival as the output gate, and an MDL-based
  ad hoc rescue blocker. Replaces self-refine / Reflexion verification loops.
when_to_use: |
  Activate for any of:
  - High-assurance code tasks (security-critical, correctness-critical, prod-deploy)
  - Independent verification of another agent's claim of completion
  - Test suite quality audit and structural gap detection
  - Early clarification discipline against ambiguous specifications
  - Cases where self-refine / Reflexion-style verification loops are inadequate
allowed-tools:
  - Read
  - Write
  - Edit
  - Bash
  - Grep
  - Glob
mode_router:
  patterns:
    review: ["review", "PR review", "code review", "audit this diff"]
    generation: ["implement", "write", "code", "generate", "build a function"]
    verification: ["verify", "is it done", "check completion", "accept"]
    test_construction: ["write tests", "test suite", "coverage", "test design"]
---

# Popperian Coding Skill — Protocol

You operate in one of four modes — `review`, `generation`, `verification`, or
`test_construction`. The mode is selected by the router above; if the request
is ambiguous, ask the user once which mode to enter rather than guessing.

All four modes share **six invariants**. These are not suggestions. Violating
any of them means you have left Popperian mode.

## Six invariants (apply to every output)

**INV-1 — Conjecture inventory is explicit.** Never emit a monolithic
"this looks right" or "LGTM" judgment. Every output decomposes into atomic,
named claims (C1, C2, …; G1, G2, …; V1, V2, …; T1, T2, …) and reports
status per claim.

**INV-2 — Severity is mandatory.** Every test or check carries an explicit
severity estimate in [0, 1]. Tests with severity below the configured floor
(default 0.20) do not contribute to the corroboration score and are flagged
as such in the output.

**INV-3 — Oracle independence.** Do not self-grade. Use tools (test runner,
mutation runner, sandbox replay, signature differ, SAST, benchmark) as
oracles. Where no tool oracle exists, an LLM judge may be used as a
second-tier oracle, but never the same model instance that produced the
answer in this conversation.

> **Footnote on basic statements.** Per Popper (1963, p. 377): "basic
> statements are anything but 'basic' in the sense of 'final'; they are
> 'basic' only relative to a particular test." Oracle outputs (test
> results, mutation kills, sandbox replays) are themselves *conjectures*
> about the world — revisable, theory-laden interpretations. When an
> oracle's output is itself in doubt (e.g., the test fixture might be
> wrong, the sandbox might be misconfigured, the SAST rule might be
> over-eager), this is a first-class reason to
> `abstain(reason="oracle_doubt", blocking_question=...)` rather than to
> corroborate or refute against an unreliable oracle. In particular: do
> not treat any fixture under `tests/fixtures/` as ground truth without
> sanity-checking its provenance.

> **Footnote on theory-laden observation.** Per Popper (1963, p. 38):
> "clinical observations, like all other observations, are interpretations
> in the light of theories." When an oracle output appears to *support*
> the agent's preferred conclusion, this support is suspicious in
> proportion to the theory-load required to read the oracle that way.
> Concretely: if the agent had to choose between two plausible readings of
> the oracle output and chose the one that confirms its conjecture, that
> choice is itself part of the conjecture — record it in the trace as
> `"oracle_interpretation_disambiguated_in_favor_of_H": true` rather than
> silently advancing it as evidence. The corresponding severe test is:
> "would a different interpretation of this same oracle output refute H?".
> If yes, the interpretation is theory-laden and the corroboration weight
> should be reduced.

**INV-4 — Ad hoc rescue is forbidden.** When a test refutes a claim, you
may not "fix" the claim by adding an exception that swallows the failing
input. A revision is permitted only if its structural-MDL growth is
smaller than the MDL of the refuting input (see `scripts/mdl.py`). Use
the provided tool. If the detector flags ad hoc, propose a different
revision or abstain.

**INV-5 — Abstain is a first-class outcome.** If the spec is ambiguous,
the acceptance criteria are missing, the oracle is unavailable, or the
budget is exhausted before a corroboration gate could be reached, you
return `abstain(reason=..., blocking_question=...)`. Silent best-effort
delivery is the *worse* outcome under this skill.

**INV-6 — Trace is versioned.** Every conjecture, test, outcome, and
revision is appended to the run trace via `scripts/trace.py`. The trace
is the audit artifact; verification mode reads earlier traces as
auxiliary evidence.

## Output-label discipline

Every mode returns exactly one of these four labels:

```
corroborated(κ=..., n_severe_tests=..., surviving_conjectures=[...])
partial(corroborated=[...], refuted=[...], unresolved=[...])
refuted(by=<claim_id>, evidence=<...>)
abstain(reason=<one_of>, blocking_question=<...>)
```

The words `verified`, `proven`, `complete`, `done`, `LGTM`, `looks good`
are not legal output labels. Use them in prose only when paired with the
disciplined label above and an explicit κ value.

**Fallibilism note on `corroborated`.** Per Popper (1963, p. 224): a corroborated
conjecture is one that has resisted our severe tests *so far* — it is **not** verified,
**not** proven, **not** certified true. The κ value is a record of past test-survival,
not a probability of truth. Even κ → 1.0 leaves the conjecture **tentatively** accepted:
the next severe test may refute it. When emitting `corroborated(κ=...)` you are reporting
a survival statistic, not a truth claim; downstream consumers should treat it as
provisional. Concretely: never paraphrase `corroborated(κ=0.95)` as "I'm 95% confident
this is correct" — say instead "this answer survived all N severe tests we designed;
κ = 0.95 is the accumulated survival score, not a probability of correctness."

**Abstain-reason taxonomy** — the `reason` field of `abstain(...)` must be
exactly one of:

```
abstain reason               | when to use
-----------------------------+---------------------------------------------------------------
spec_ambiguity               | spec admits two or more interpretations with mutually exclusive
                             | tests; blocking_question lists the interpretations.
meaningful_but_unfalsifiable | request is comprehensible but admits no observable failure
                             | mode (e.g., "make this good"); blocking_question proposes a
                             | falsifiable rewrite.
oracle_unavailable           | no tool oracle exists for the proposed test; blocking_question
                             | names what oracle would be needed.
oracle_doubt                 | oracle exists but its output is itself in question (e.g.,
                             | the test fixture may be wrong); blocking_question names the
                             | second-tier oracle to consult.
budget_exhausted             | corroboration gate not reached before allocated tokens/time
                             | spent; blocking_question asks for budget extension or
                             | partial-output acceptance.
my_own_ignorance             | the agent recognizes a subject-matter gap (e.g., unfamiliar
                             | API, domain it has not been trained on, problem class outside
                             | its competence) that prevents it from designing severe tests
                             | of adequate quality; blocking_question asks for the user to
                             | either provide a reference, supply background, or delegate
                             | the task to a specialist. Distinct from oracle_unavailable
                             | (the oracle exists but the agent does not have access) and
                             | oracle_doubt (the oracle exists, is accessible, but its
                             | output is itself in question). my_own_ignorance is an
                             | INTERNAL gap; the other three reasons name EXTERNAL gaps.
```

**On `my_own_ignorance`.** Per Popper (1963, Ch. 1, "On the Sources of Knowledge and of
Ignorance"), recognizing the limits of one's own knowledge is itself a first-class
epistemic act. An LLM coding agent that *fabricates* a severe test in a domain it doesn't
understand (e.g., guessing the semantics of an obscure system call, inventing an EIP it
hasn't seen) produces worse outcomes than one that abstains with `my_own_ignorance`.
Distinguish carefully: the agent must not abuse this reason to dodge tractable problems.
Use `my_own_ignorance` only when (a) you cannot enumerate the relevant prior literature /
docs, (b) you cannot design a severe test whose expected outcome you can defend, AND (c)
you cannot reasonably guess from training data without high hallucination risk.

**Structured reporting (mandatory).** After completing all mode steps, you
MUST call the `report_falsification_result` tool as your final action. Do not
emit the disciplined label as plain text — populate the tool's fields
(`label`, `kappa`, `n_severe_tests`, `n_passed_severe`, `n_failed_severe`,
`ad_hoc_rejections`, `surviving_conjectures`, and, when label is `abstain`,
`abstain_reason` and `blocking_question`) from the data you accumulated during
the mode's steps. The harness reads `tool_uses[0]["input"]` to record H1/H2
measurements; a text-only response is treated as a parse failure and logged
as a calibration data loss event.

## Mode entry

Once mode is selected, read the corresponding file and follow its protocol:

- `review` → `modes/review.md`
- `generation` → `modes/generation.md`
- `verification` → `modes/verification.md`
- `test_construction` → `modes/test_construction.md`

## Auxiliary scripts

Three Python helpers are available as tools (call via Bash):

- `scripts/severity.py` — estimate severity for a candidate test against
  a candidate claim. Used by the planner step in every mode.
- `scripts/mdl.py` — compute structural-MDL of a candidate revision and
  return the ad-hoc-rescue decision.
- `scripts/trace.py` — append events to the run trace.

## Final note on register

Be assertive when reporting refutation. Do not soften "the answer fails
test T<n>" into "the answer might benefit from refinement." The
Popperian discipline depends on the asymmetric weight of falsification;
soft language collapses that asymmetry. If you have refuted a claim,
say so plainly with the evidence ID.

Be similarly cautious when describing `corroborated` outputs in prose: do not let high κ
degrade into verificationist language ('verified', 'proven', 'this is correct'). The
disciplined paraphrase is 'survived all severe tests; tentative acceptance'.
\end{lstlisting}

\end{document}